\newcommand{\pagenumbaa}{1}
\theoremstyle{plain} \newtheorem{lemma}{Lemma}
\theoremstyle{plain} \newtheorem{proposition}{Proposition}
\theoremstyle{definition} \newtheorem{definition}{Definition}
\theoremstyle{plain} \newtheorem{example}{Example}
\newcommand*{\id}{\mathrm{id}}
\newcommand*{\tr}{\mathrm{tr}}
\newcommand*{\ket}[1]{| #1 \rangle}
\newcommand{\proj}[1]{|#1\rangle\!\langle #1|}
\begin{document}

\title{True randomness from realistic quantum devices}
\author{Daniela Frauchiger}
\author{Renato Renner}
\author{Matthias Troyer}
\affiliation
{Institute for  Theoretical Physics, ETH Zurich, Switzerland}

\begin{abstract}
  Even if the output of a Random Number Generator (RNG) is perfectly uniformly distributed, it may be correlated to pre-existing information and therefore be predictable. Statistical tests are thus not sufficient to guarantee that an RNG is usable for applications, e.g., in cryptography or gambling, where unpredictability is important. To enable such applications a stronger notion of randomness, termed ``true randomness'', is required, which includes independence from prior information.

  Quantum systems are particularly suitable for true randomness generation, as their unpredictability can be proved based on physical principles. Practical implementations of Quantum RNGs (QRNGs) are however always subject to noise, i.e., influences which are not fully controlled. This reduces the quality of the raw randomness generated by the device, making it necessary to post-process it. Here we provide a framework to analyse realistic QRNGs and to determine the post-processing that is necessary to turn their raw output into true randomness. 
\end{abstract}

\maketitle

\setcounter{page}{\pagenumbaa}
\thispagestyle{plain}

\section{Introduction} \label{sec:introduction}

The generation of good random numbers is not only of academic interest but also very relevant for practice. The quality of the randomness used by cryptographic systems, for instance, is essential to guarantee their security. However, even among manufacturers of \emph{Random Number Generators (RNGs)}, there does not seem to exist a consensus about how to define randomness or measure its quality. What is worse, many of the RNGs used in practice are clearly insufficient.  Recently, an analysis of cryptographic public keys available on the web revealed that they were created from very weak randomness, a fact that can be exploited to get hold of a significant number of the associated private keys~\cite{epfl12}. The quality of RNGs is also important for non-cryptographic applications, e.g., in data analysis, numerical simulations~\cite{metropolis49} (where bad randomness affects the reliability of the results), or  gambling.

\subsection{What is true randomness?}

Several conceptually quite diverse approaches to randomness have been considered in the literature. One possibility is to view randomness as a property of actual values. Specifically, a bit sequence may be called random if its Kolmogorov complexity is maximal.\footnote{The \emph{Kolmogorov complexity} of a bit string corresponds to the length of the shortest program that reproduces the string. Whereas for finite bit strings the notion depends on the choice of the language used to describe the program, this dependence disappears asymptotically for long strings.} However, this approach only makes sense asymptotically for infinitely long sequences of bits. Furthermore, the Kolmogorov complexity is not computable~\cite{kolmogorov}. But, even more importantly, the Kolmogorov complexity of a bit sequence does not tell us anything about how well the bits can be predicted, thus precluding their use for any application where unpredictability is relevant, such as the drawing of lottery numbers.\footnote{Last week's lottery numbers will probably have the same Kolmogorov complexity as next week's numbers. Nevertheless, we would not reuse them.}

The same problem arises for the commonly used statistical criterion of randomness, which demands uniform distribution, i.e., that each value is equally likely.  To see this, imagine two RNGs that output the same, previously stored, bit string $K$. If $K$ is uniformly distributed then both RNGs are likely to pass any statistical test of uniformity. However, given access to one of the RNGs, the output of the other can be predicted (as it is identical by construction).  Hence, even if the quality of an RNG is certified by statistical tests, its use for drawing lottery numbers, for instance, may be prohibited. 

Here we take a different approach where, instead of the actual numbers or their statistics, we consider the \emph{process} that generates them.  We call a process \emph{truly random} if its outcome is uniformly distributed \emph{and} independent of all information available in advance. A formal definition can be found in Section~\ref{trueRandomness}. We stress that this definition guarantees \emph{unpredictability} and thus overcomes the problems of the weaker notions of randomness described above. 

\subsection{How to generate true randomness?}

Having now specified what we mean by true randomness, we can turn to the question of how to generate truly random numbers. Let us first note that \emph{Pseudo-Random Number Generators (PRNGs)}, which are widely used in practice,\footnote{PRNGs generate long sequences of random-looking numbers by applying a function to a short \emph{seed} of random bits. They  are convenient because they do not require any specific hardware and because they are efficient.} cannot meet our criterion for true randomness. Still,  the output of a PRNG may be computationally indistinguishable from truly random numbers --- a property that is sufficient for many applications. But to achieve this guarantee, the PRNG must be initialised with a seed that is itself truly random.  Hence, a PRNG alone can never replace an RNG. 

\emph{Hardware-based RNGs} make use of a physical process to generate randomness. Because classical physics is deterministic, RNGs that rely on phenomena described within a purely classical noise model, such as thermal noise~\cite{zhun01}, can only be proved random under the assumption that the microscopic details of the system are inaccessible. This assumption is usually hard to justify physically. For example, processes in resistors and Zener diodes have memory effects~\cite{stipcevic11}. Hence, someone who is able to gather information about the microscopic state of the device --- or even influence it\footnote{For example, the state of the device may be influenced by voltage changes of its power supply or by incident radiation.} --- could predict its future behaviour.

In contrast to this, measurements on quantum systems are intrinsically probabilistic. It is therefore possible to prove, based on physical principles, that the output of a \emph{Quantum Random Number Generator (QRNG)} is truly random.  Moreover, a recent result \cite{colbeck11} implies that the unpredictability does not depend on any completeness assumptions about quantum theory, i.e., unpredictability is guaranteed within all possible extensions of the theory.\footnote{More precisely, consider an arbitrary alternative theory that is compatible with quantum theory (in the sense that its predictions do not contradict quantum theory) and in which true  randomness can exist in principle. Then the outcome of a process is unpredictable within the alternative theory whenever it is unpredictable within quantum theory.}

\subsection{Randomness from imperfect devices} \label{sec:introex}

In \emph{practical implementations} of QRNGs, the desired quantum process can never be realised perfectly. There are always influences that are not fully controlled by the manufacturer of the device. In the following, we generally term such influences ``noise''. While noise cannot be controlled, we can also not be sure that it is actually random.  This affects the guarantees we can provide about the randomness of the output of the QRNG.

\begin{figure}
\begin{tikzpicture}
\scriptsize
\draw[->,decorate,decoration={snake,post length=1 mm,amplitude=1mm,segment length=6mm}] (1,0)--(3.8,0);
\draw[line width=0.4mm] (3.75,-0.5) --(4.25,0.5);
\draw[->,decorate,decoration={snake,post length=1 mm,amplitude=1mm,segment length=6mm}] (4,0.2)--(4,2);
\draw[->,decorate,decoration={snake,post length=1 mm,amplitude=1mm,segment length=6mm}] (4.2,0)--(6,0);
\draw (4.5,3) arc(0:180:0.5);
\draw (4.5,3)--(4.5,2.2)--(3.5,2.2)--(3.5,3);
\draw (7,-0.5) arc(-90:90:0.5);
\draw (7,-0.5)--(6.2,-0.5)--(6.2,0.5)--(7,0.5);
\node at (4,2.8) {{\normalsize $D_v$}};
\node at (6.8,0)  {{\normalsize $D_h$}};
\node at (6.5,3)  {{\normalsize $X=(X_v,X_h)$}};
\node at (7.5,2)  {{\normalsize $X_{v,h}=\left\{\begin{array}{ll}1&\text{if } D_{v,h} \text{ clicks.}\\0&\text{else}\end{array}\right.$}};
\end{tikzpicture}
\caption{\emph{QRNG based on a Polarising Beam Splitter (PBS).} The beam splitter reflects or transmits incoming light depending on its polarisation (vertical or horizontal). In the ideal case, where the incoming light consists of diagonally polarised single-photon pulses and where the detectors click with certainty if and only if a photon arrives, each of the two outcomes $X=(0,1)$ and $X=(1,0)$ occurs with probability $\frac{1}{2}$.}\label{fig:PBS1} 
\vskip -0.5cm
\end{figure}
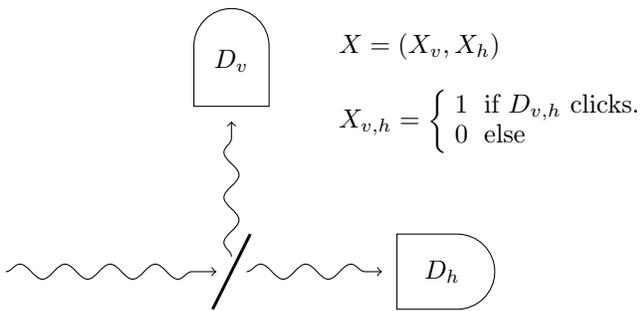

As an example, consider a QRNG based on a Polarising Beam Splitter (PBS), which reflects vertically polarised photons and transmits horizontally polarised photons  as illustrated in Fig.~\ref{fig:PBS1}  (see also Fig.~\ref{QRNG} as well as the more detailed description in Section~\ref{example}). To generate randomness, the PBS is illuminated by a diagonally polarised light pulse. After passing through the PBS the light hits one of two detectors, labeled $D_v$ and $D_h$, depending on whether it was reflected ($v$) or transmitted ($h$). The output of the device is $X = (X_v, X_h)$, where $X_{v,h}$ are bits indicating whether the corresponding detector $D_{v,h}$ clicked. In the ideal case, where the light pulses contain exactly one photon and where the  detectors are maximally efficient, only the outcomes $X = (0,1)$ and $X = (1,0)$ are possible. The process thus corresponds to a polarisation measurement of a diagonally polarised photon with respect to the horizontal and vertical direction. According to quantum theory, the resulting bit, indicating whether $X=(0,1)$ or $X=(1,0)$, is uniformly distributed and unpredictable. That is, it is truly random. 

The situation changes if the ideal detectors are replaced by imperfect ones, which sometimes fail to notice an incoming photon, and if the light source sometimes emits pulses with more than one photon. Consider the case where the pulse is so strong that   (with high probability) there are photons hitting both $D_v$ and $D_h$ at the same time.  We now still obtain outcomes  $X=(0,1)$ or $X=(1,0)$ since one of the detectors may not click. But the outcome can no longer be interpreted as the result of a polarisation measurement. Rather, it is determined by the detectors' probabilistic behaviour, i.e., whether they were sensitive at the moment when the light pulses arrived. In other words, the  device outputs detector noise instead of quantum randomness originating from the PBS! 

The example illustrates that the output of an imperfect QRNG may be correlated to noise and, hence, potentially to the history of the device or its  environment. It is therefore no longer guaranteed to be truly random. But, luckily, this can be fixed. By appropriate post-processing of the \emph{raw randomness} generated by an imperfect device it is still possible to obtain true randomness.

\subsection{Turning noisy into true randomness}

A main contribution of this work is to provide a framework for analysing practical (and, hence, imperfect) implementations of QRNGs and determining the post-processing that is necessary to turn their raw outputs into truly random numbers. The framework is general; it is applicable to any QRNG that can be modelled within quantum theory. While the derived guarantees on the randomness of a QRNG thus rely on its correct modelling, as we shall see, no additional completeness assumptions need to be made.

\begin{figure}[h]
\begin{tikzpicture}[scale=0.3]
\scriptsize
\draw (0,0.5)--(10,0.5);
\draw (0,-0.5)--(10,-0.5);
\foreach \x in {0,...,10} \draw (\x,-0.5) -- (\x,0.5);
\draw[->] (10.5,0)--(11.5,0);
\draw (11.75,1.5)--(11.75,-1.5)--(15.25,-1.5)--(15.25,1.5)--(11.75,1.5);
\node at (13.5,0.75) {post-};
\node at (13.5,-0.75) {process.};
\draw[->] (15.5,0)--(16.5,0);
\draw (17,0.5)--(24,0.5);
\draw (17,-0.5)--(24,-0.5);
\foreach \x in {17,...,24} \draw (\x,-0.5) -- (\x,0.5);
\node at (5,-1.5) {$X$};
\draw[decorate,decoration={snake,post length=1 mm,amplitude=1mm,segment length=6mm}] (5.5,-1.5)--(13,-4);
\node at (13.5,-4) {$W$};
\node at (20.5,-1.5) {$f(X)$};
\draw[decorate,decoration={snake,post length=1 mm,amplitude=1mm,segment length=6mm}] (19.25,-1.5)--(14,-4);
\draw[line width=0.4mm] (17,-2.25)--(17.5,-3.25);
\draw[line width=0.4mm] (17,-3.25)--(17.5,-2.25);
\draw[<->] (0,1.25)--(10,1.25);
\node at (5,1.75) {$n$};
\draw[<->] (17,1.25)--(24,1.25);
\node at (20.5,1.75) {$\ell$};
\end{tikzpicture}
\caption{\emph{Post-processing by block-wise hashing.} The raw randomness may depend on side information $W$. To post-process it, the randomness is casted into blocks consisting of $n$ bits. Each block, $X$, is given as input to a hash function, $f$, that outputs a shorter block, $f(X)$, of $\ell$ bits. The framework we propose allows to determine the fraction $\ell/n$ such that $f(X)$ is truly random, i.e., independent of $W$.}\label{fig:postProc} 
\end{figure}
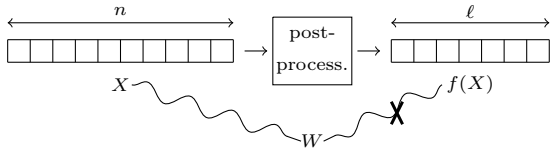

For the post-processing we consider block-wise hashing: the raw randomness is casted into $n$-bit strings, to which a \emph{hash function} is applied that outputs $\ell$-bit strings (where $\ell$ is generally smaller than $n$) as illustrated in Fig.~\ref{fig:postProc}. If the parameters $n$ and $\ell$ as well as the hash function are well chosen then the final $\ell$-bit strings are, to good approximation, truly random. Note that this is also known as \emph{randomness extraction} or \emph{privacy amplification}, and has been studied extensively in the context of classical data processing and cryptography~\cite{BBR88,ImLeLu89,ImpZuc89,BBCM95,NisZuc96,Trevisan01,Shaltiel02}. However, only few hashing techniques are known to be sound in the context of quantum information~\cite{RenKoe06,Renner06,KonTer08,TaShma09,TSSR10,BenTa12,DPVR12}. In this work we focus on a particular hashing procedure, called \emph{two-universal hashing} (or \emph{leftover hashing}), for which this property has been proved~\cite{Renner06,TSSR10} and which, additionally, is computationally very efficient and therefore suitable for practical purposes~\cite{CarWeg79,WegCar81,Stinson02}. As the name suggests, these hash functions are universal in the sense that their use does not depend on the details of the raw randomness, but only on its overall quality, which is measured in terms of an entropic quantity, called \emph{min-entropy} (see Section~\ref{sec:hash} below). An explicit implementation of such a hash function is described in Appendix~\ref{sec:impl}.


\subsection{Related work}

To analyse the quality of the raw randomness generated by a realistic device we must take into account any noise as side information. This is important as our goal is to prove unpredictability of the final randomness without assuming that the noise is itself unpredictable.  Somewhat surprisingly, such side information is usually not considered in the literature on RNGs. The only exception, to our knowledge, is the work of Gabriel \emph{et al.}~\cite{gabriel10} and Ma \emph{et al.}~\cite{ma12}, but the post-processing procedure applied in these cases does not guarantee the desired independence from noise either. The former study~\cite{gabriel10} uses Shannon entropy to quantify the randomness produced by the device, which only gives an upper bound on the amount of extractable independent randomness (see Appendix~\ref{vN}). In the framework proposed by Ma \emph{et al.}~\cite{ma12} the min-entropy is used to quantify randomness in the presence of noise, yet without conditioning on the noise. The final randomness is then uniformly distributed, but there is no guarantee that it is noise-independent.

We also note that randomness generation as studied in this work is different from \emph{device-independent randomness expansion}~\cite{colbeck09, pironio10}. In the latter, randomness is generated and certified based on correlations from local measurements on entangled quantum systems, whereas no assumptions about the internal workings of the device that generates the correlations are necessary. However, as in the case of pseudo-random number generation, this requires a source of initial randomness, so that an  RNG would still be needed to run the scheme. 

From a practical perspective, another difference between our approach and device-independent randomness expansion is that implementations of the latter are very challenging with  state-of-the-art technology and their efficiency is low (we know of only $42$ numbers generated this way~\cite{pironio10}). In contrast, QRNGs are easier to implement (commercial devices are already available) and efficient (the bit rates are of the order of Mbits/s~\cite{whitepaper}).

\bigskip

The remainder of this paper is organised as follows. In Section~\ref{trueRandomness} we formally define true randomness. Section~\ref{sec:hash} reviews the concept of hashing (cf.\ Appendix~\ref{sec:impl} for an implementation of hashing). In Section~\ref{sec:QRNGmodel} we explain how to generally model and analyse (imperfect) implementations of QRNGs and, in particular, introduce the notion of ``classical noise'', which we use to assess the quality of the raw randomness generated by such a device.  Furthermore, we show that the approach is \emph{complete}, in the sense that our statements about randomness remain valid if the model was replaced by another compatible model.  In Section~\ref{example} we describe a simple example that illustrates how the framework can be applied (see also Appendix~\ref{quantis} for a more realistic example). 

\section{Preliminaries}

\subsection{True randomness}\label{trueRandomness}

To introduce a formal and quantitative definition of true randomness, we make use of the notion of \emph{space time variables}. These are random variables with an associated coordinate that indicates the physical location of the value in relativistic space time~\cite{colbeck11}.\footnote{If a value is accessible at multiple places and times, this may be modelled by a set of space time variables with the same value but different space time coordinates.} We model the output, $X$, of a random process as well as all \emph{side information}, i.e., any additional variables that may be correlated to $X$, by space time variables. The space time coordinate of $X$ should be interpreted as the event where the process generating $X$ is started. For side information, the coordinates of the corresponding space time variables indicate when and where this information is accessible.

\begin{definition} \label{def_random} $X$ is called \emph{$\epsilon$-truly random} if it is $\epsilon$-close to uniform and uncorrelated to all other space time variables which are not in the future light cone of $X$. Denoting this set by $\Gamma_X$, this can be expressed as 
\begin{align}\label{eq:rand} \frac{1}{2} \| P_{X\Gamma_X}-P_{\bar{X}}\times P_{\Gamma_X} \|_1 \leq \epsilon 
\end{align}
where 
\begin{align} \label{eq:PX}
 P_{\bar{X}}(x)=\frac{1}{|\mathcal{X}|}\hspace{1cm}\forall\ x, 
\end{align}
and where $\frac{1}{2} \|P_X-Q_X \|_1 = \frac{1}{2} \sum_x |P_X(x)-Q_X(x)|$ is the trace distance.  
\end{definition}

We remark that the trace distance has the following operational interpretation: If two probability distributions are $\epsilon$-close to each other in trace distance then one may consider the two scenarios described by them as identical except with probability at most $\epsilon$~\cite{RenKoe06}. We also note that Definition~\ref{def_random} is closely related to the concept of a free choice~\cite{colbeck11,colbeck12,colbeck13}.

\subsection{Leftover hashing with side information}\label{sec:hash}

As described in the introduction, the post-processing of the raw randomness generated by a device consists of applying a hash function. The \emph{Leftover Hash Lemma with Side Information}, which we state below, tells us how to choose the parameters of the hash function depending on the quality of the raw randomness. To quantify the latter, we need the notion of min-entropy, which we now briefly review. 

Let $X$ be the value whose randomness we would like to quantify and let $W$ be any other random variable that models side information about $X$. Formally, this is described by a joint probability distribution $P_{X W}$. The \emph{conditional min-entropy of $X$ given $W$}, denoted $H_{\min}(X|W)$, corresponds to the probability of guessing $X$ given $W$. It is defined by
\begin{align}
2^{-H_{\min}(X|W)}=\sum\limits_{w}P_W(w)2^{-H_{\min}(X|W=w)} 
\label{eq:condHmin1}
\end{align}
where
\begin{align}
H_{\min}(X|W=w)=-\log_2\left[\max\limits_{x}P_{X|W=w}(x|w)\right].
\label{eq:Hmin}
\end{align}
All logarithms are with respect to base $2$.

Since we are studying quantum devices, we will also need to consider the more general case of non-classical side information, i.e., $X$ may be correlated to a quantum system, which we denote by $E$. Let $\rho_E^x$ be the state of $E$ when $X = x$. This situation can be characterised conveniently by a \emph{cq-state},
\begin{align}\label{CQstate} 
  \rho_{XE}=\sum_{x\in\mathcal{X}}P_X(x)|x\rangle\langle x|\otimes\rho_E^x \ ,
\end{align}
 where one thinks of the classical value $x \in \mathcal{X}$ as encoded in mutually orthogonal states $\{|x\rangle\}_{x\in\mathcal{X}}$ on a quantum system $X$.  The conditional min-entropy of $X$ given $E$ is then defined as 
\begin{multline} \label{HminQ}
  H_{\min}(X|E) \\= \sup \{\lambda : \, 2^{-\lambda} \id_X\otimes\sigma_E-\rho_{XE} \geq 0; \sigma_E \geq 0\}   \ .
\end{multline}
It has been shown that this corresponds to the maximum probability of guessing $X$ given $E$, and therefore naturally generalises the classical conditional min-entropy defined above~\cite{koenig09}. 

For later use, we also note that the min-entropy satisfies the \emph{data processing inequality}~\cite{Renner06,Beaudry12}. One way to state this is that discarding side information can only increase the entropy, i.e., for any two systems $E$ and $E'$ we have
\begin{align} \label{eq:DPI}
  H_{\min}(X|E E') \leq H_{\min}(X|E) \ .
\end{align}

To formulate the Leftover Hash Lemma, we will also employ a quantum version of the independence condition occurring in~\eqref{eq:rand},  
\begin{align*}
  \frac{1}{2} \bigl\| \rho_{XE}-\rho_{\bar{X}}\otimes\rho_{E} \bigr\|_1 \leq \epsilon \ ,
\end{align*}
where $\|\cdot\|_1=\tr(|\cdot|)$ denotes the trace norm and where $\rho_{\bar{X}}=\frac{1}{|\mathcal{X}|}\id_X$ is the fully mixed density operator on $X$. The condition characterises the states for which $X$ can be considered (almost) uniformly distributed and independent of $E$. An important property of this condition is that the trace norm can only decrease if we apply a physical mapping, e.g., a measurement, on the system~$E$. 

Leftover hashing is a special case of randomness extraction (see the introduction) where the hash function is chosen from a particular class of functions,  called \emph{two-universal}~\cite{CarWeg79,WegCar81, BBR88,ImLeLu89,BBCM95, Stinson02}.
They are defined as families  $\mathcal{F}$ of functions from $\mathcal{X}$ to $\{0,1\}^\ell$ such that
\[\Pr(f(x)=f(x'))\leq \frac{1}{2^\ell}\]
for any distinct $x,x'\in\mathcal{X}$ and $f$ chosen uniformly at random from $\mathcal{F}$.

We now have all ingredients ready to state the Leftover Hash Lemma with Side Information.

\begin{lemma}\label{lhl}
Let $\rho_{XE}$ be a cq-state and let $\mathcal{F}$ be a two-universal family of hash functions from $\mathcal{X}$ to $\{0,1\}^\ell$. Then
\begin{align*}
\frac{1}{2}\bigl\|\rho_{F(X)EF}-\rho_{\bar{Z}}\otimes\rho_{EF}\bigr\|_1 \leq 2^{-\frac{1}{2}(H_{\min}(X|E)-\ell)}:=\epsilon_{\text{hash}} \ , 
\end{align*}
where 
\begin{align*}
\rho_{F(X)EF}=\sum_{f\in\mathcal{F}} \frac{1}{|\mathcal{F}|} \rho_{f(X)E}\otimes|f\rangle\langle f| \ .
\end{align*}
and where $\rho_{\bar{Z}}$ is the fully mixed density operator on the space encoding $\{0,1\}^\ell$. 
\end{lemma}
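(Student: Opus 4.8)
The plan is to establish the Leftover Hash Lemma with side information by a standard two-step argument: first control the distance to uniformity using the collision probability, and then bound the collision probability via the two-universality property. Concretely, I would introduce the conditional collision entropy (or equivalently work directly with the collision probability) of the state $\rho_{F(X)EF}$, and relate the trace distance appearing in the statement to an $L_2$-type quantity. The key inequality here is that for a cq-state $\sigma_{ZE}$ on a system $Z$ of dimension $d=2^\ell$ one has $\frac{1}{2}\|\sigma_{ZE}-\rho_{\bar Z}\otimes\sigma_E\|_1 \le \frac{1}{2}\sqrt{d}\,\|\sigma_{ZE}-\rho_{\bar Z}\otimes\sigma_E\|_2$, where the $2$-norm is taken with respect to a suitable weighting by $\sigma_E$ (the ``pretty good'' or $\sigma_E^{-1/4}$-weighted Hilbert–Schmidt norm). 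This reduces the problem to bounding a weighted Hilbert–Schmidt norm, which expands into a collision-probability expression.

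The second step is to compute the expectation over the random choice of $f\in\mathcal{F}$ of this collision quantity. Expanding $\|\rho_{f(X)E}-\rho_{\bar Z}\otimes\rho_E\|_2^2$ in the weighted norm produces, after averaging over $f$, a term of the form $\mathbb{E}_f[\sum_{x,x'}\cdots]$ where the diagonal ($x=x'$) contribution gives something like $2^{-H_{\min}(X|E)}$ (more precisely the relevant conditional collision quantity, which is upper bounded by $2^{-H_{\min}(X|E)}$ using the operator inequality defining $H_{\min}$ from \eqref{HminQ}) and the off-diagonal terms ($x\ne x'$) are suppressed by the two-universality bound $\Pr_f(f(x)=f(x'))\le 2^{-\ell}$, which exactly cancels against the subtracted $\rho_{\bar Z}\otimes\rho_E$ term. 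The net result is a bound of the form $\mathbb{E}_f[\text{collision}] \le 2^{-\ell}\,2^{-H_{\min}(X|E)}$, and plugging this back through the $\sqrt{d}$ relation with $d=2^\ell$ and Jensen's inequality (to pull the expectation inside the square root) yields $\epsilon_{\text{hash}} = 2^{-\frac12(H_{\min}(X|E)-\ell)}$.

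The main obstacle — and the step requiring the most care — is handling the side information $E$ correctly, i.e., choosing the right $\sigma_E$-weighted $2$-norm so that (i) the norm-reduction inequality $\|\cdot\|_1 \le \sqrt{d}\|\cdot\|_2$ holds with the correct dimension factor, and (ii) the diagonal collision term is genuinely controlled by the min-entropy $H_{\min}(X|E)$ rather than some larger quantity. One clean way to organise this is to optimise over the choice of $\sigma_E$ at the end, taking $\sigma_E$ to be the operator achieving (or approaching) the supremum in \eqref{HminQ}, so that $\rho_{XE}\le 2^{-H_{\min}(X|E)}\id_X\otimes\sigma_E$; this inequality is precisely what converts the collision expression into the min-entropy bound. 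Since $F$ is a public classical register, it factors out as a classical average throughout, and the data processing inequality \eqref{eq:DPI} is not strictly needed but confirms consistency. I would also remark that this is essentially the proof given in~\cite{Renner06,TSSR10}, so I would either reproduce the short argument above or simply cite it, noting that the only input specific to our setting is the two-universality of $\mathcal{F}$.
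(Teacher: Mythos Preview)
Your proposal is correct, and in fact the paper does not give its own proof of this lemma at all: it is stated as a known result, with the proof deferred to the references~\cite{Renner06,TSSR10}. The argument you sketch (trace-norm to weighted Hilbert--Schmidt norm reduction, expansion of the collision probability, two-universality to kill the off-diagonal terms, and choice of the optimal $\sigma_E$ from~\eqref{HminQ} to recover $H_{\min}(X|E)$) is precisely the proof in those references, so your concluding remark --- that one may simply cite~\cite{Renner06,TSSR10} --- is exactly what the paper does.
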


The lemma tells us that, whenever $\ell<H_{\min}(X|E)$, the output $f(X)$ of the hash function is uniform and independent of $E$, except with probability $\epsilon_{\text{hash}} < 1$. The entropy $H_{\min}(X|E)$ thus corresponds to the amount of randomness that can be extracted from $X$ if one requires uniformity and independence from $E$. Furthermore, the deviation $\epsilon_{\text{hash}}$ decreases exponentially as $H_{\min}(X|E)$ increases. The above holds on average, for $f$ chosen uniformly at random from the family $\mathcal{F}$.  Note that the inclusion of $f$ in the state is important as this ensures that $f(X)$ is  random even if the function $f$ is known.

We also remark that the version of the Leftover Hash Lemma given above, while sufficient for our purposes, could be made almost tight by replacing the min-entropy by the smooth min-entropy~\cite{Renner06,TSSR10}. For the case of classical side information, $W$, one may also use the Shannon entropy as an upper bound. More precisely, for any function $f: \mathcal{X}\rightarrow\{0,1\}^\ell$ such that $\|P_{f(X)W}- {P_{\bar{Z}} \times P_{W}} \|_1\leq\epsilon$ (where $P_{\bar{Z}}$ is the uniform distribution on $ \{0,1\}^\ell$, cf.~\eqref{eq:PX}) it holds that
\begin{equation}
\ell \leq H(X|W)+4\epsilon\log \ell+2h(\epsilon),
\end{equation}
where $h(\cdot)$ is the binary entropy function  (cf.\ Appendix~\ref{vN} for a proof).  

For a device that generates a continuous sequence of output bits, the hash function is usually applied block-wise and the outcomes are concatenated. In this case, as shown in Appendix~\ref{sec:seed}, it is sufficient to choose the hash function once, using randomness that is independent of the device. The same function can then be reused for all blocks. For practice, this means that the hash function may be selected already when manufacturing the device (using some independent randomness) and be hardcoded on the device. An efficient implementation of two-universal hashing is presented in Appendix \ref{sec:impl}.

\section{A Framework for True Quantum Randomness Generation}\label{sec:QRNGmodel}

\subsection{General idea}

On an abstract level, a QRNG may be modelled as a process where a quantum system is prepared in a fixed state and then measured. Under the assumption that (i)~the state of the system is pure and that (ii)~the measurement on the system is projective the outcomes are truly random, i.e., independent of anything preexisting~\cite{colbeck11}. However, for realistic implementations, neither of the two assumptions is usually satisfied. If the preparation is noisy then the system is, prior to the measurement, generally in a mixed state. Furthermore, an imperfect implementation of a projective measurement, e.g., with inefficient detectors, is no longer projective, but rather acts like a  general Positive-Operator Valued Measure (POVM) on the system~\cite{dariano05}. These deviations from the assumptions~(i) and~(ii) mean that there exists side information that may be correlated to the outcomes of the measurement. (For example, for a mixed state, the side information could indicate which component of the mixture was prepared.) Our task is therefore to quantify the amount of independent randomness that is still present in the measurement outcomes. More precisely, we need to find a lower bound on the conditional min-entropy of the measurement outcomes given the side information. This then corresponds to the number of truly random bits we can extract by two-universal hashing.

 \subsection{Side information about realistic QRNGs}

 It follows from Neumark's theorem that a POVM can always be seen as a projective measurement on a product space, consisting of the original space on which the POVM is defined and an additional space (this is known as a \emph{Neumark extension} of the POVM \cite{Peres90}). We can therefore, even for a noisy QRNG, assume without loss of generality that the measurement is projective, but possibly on a larger space, which includes additional degrees of freedom not under our control.\footnote{Technically, the lack of control of certain degrees of freedom is modelled by a mixed initial state of the corresponding subspace.} These additional degrees of freedom will in general be correlated to side information. Once the projective measurement is known, it is not necessary to model the interaction of the QRNG with the environment. As we shall see, any side information will be taken into account automatically in our framework.

Let us stress, however, that given a description of a QRNG in terms of a general POVM, the choice of a Neumark extension is not unique and may impact the analysis.  In particular, the min-entropy, and hence the amount of extractable randomness, can be different for different extensions (see Appendix~\ref{POVM}). Therefore, it is necessary that the physical model of the QRNG specifies the projective measurement explicitly. 

\begin{definition}\label{def:QRNG}
A \emph{QRNG} is defined by a density operator $\rho_{S}$ on a system  $S$ together with  a projective measurement\footnote{Mathematically, a projective measurement on $S$ is defined by a family of projectors, $\Pi_S^x$, such that $\sum_{x} \Pi_S^x = \id_S$.} $\{\Pi_S^x\}_{x\in\mathcal{X}}$ on $S$. The \emph{raw randomness} is the random variable $X$ obtained by applying this measurement to a system prepared according to $\rho_S$. 
\end{definition}

Note that the probability distribution $P_X$ of $X$ is therefore given by the Born rule

\begin{align} \label{eq:Born}
P_X(x)=\tr(\Pi_S^x\rho_S) \ .
\end{align}

\begin{example}[Ideal PBS-based QRNG] \label{ex:idealPBS}
  Consider a QRNG based on a Polarising Beam Splitter (PBS), as described in the introduction (see Section~\ref{sec:introex}). One may view the source as well as the PBS as part of the state preparation, so that $\rho_S$ corresponds to the joint state of the two light modes traveling to the detectors, $D_v$ and $D_h$, respectively (see Fig.~\ref{fig:PBS1}). 
 In the ideal case, where the source emits one single diagonally polarised photon, $\rho_S = \proj{\phi}_{D_v D_h}$ is the pure state defined by 
\begin{align*} 
  \ket{\phi}_{D_v D_h} = {\textstyle \frac{1}{\sqrt{2}}} \bigl( \ket{0}_{D_v} \otimes \ket{1}_{D_h} + \ket{1}_{D_v} \otimes  \ket{0}_{D_h} \bigr) \ ,
\end{align*}
where we used the photon number  bases for $D_v$ and $D_h$. Provided the detectors are perfect, their action is defined for $D=D_{v,h}$ by the projectors  $\Pi_D^0 = \proj{0}_D$ and ${\Pi_D^1 = \proj{1}_D}$.  Since each of the two modes ($D_v$ and $D_h$)  is measured separately, the overall measurement is given by
\begin{align*}
\{\Pi_{D_v}^0 \otimes \Pi_{D_h}^0, \Pi_{D_v}^0 \otimes \Pi_{D_h}^1, \Pi_{D_v}^1 \otimes \Pi_{D_h}^0, \Pi_{D_v}^1 \otimes \Pi_{D_h}^1\} \ .
\end{align*}
Following~\eqref{eq:Born}, the  raw randomness $X = (X_v, X_h)$ is equivalent to a uniformly distributed bit, i.e., 
\begin{align*}
  P_X(0, 1) = P_X(1, 0) = {\textstyle \frac{1}{2}} \ .
\end{align*}
\end{example}

\begin{example}[Inefficient detector] \label{ex:inefficient}
A realistic photon detector detects an incoming photon only with bounded probability $\mu$. On the subspace of the optical mode,  $D=D_h$ or $D=D_v$, spanned by $\ket{0}_{D}$ (no photon) and $\ket{1}_{D}$ ($1$ photon), its action is given by the POVM $\{M^0_{D}, M^1_{D}\}$ with 
\begin{align*}
  M^1_{D} = \mu \proj{1}_{D} \quad \text{and} \quad M^0_{D} = \id_{D} - M^1_{D} \ . 
\end{align*}
To describe this as a projective measurement, we need to consider an extended space with an additional subsystem, $D'$, that determines whether the detector is sensitive or not (states $\ket{1}_{D'}$ and $\ket{0}_{D'}$, respectively). Specifically, we may define the extended projective measurement $\{\Pi_{D D'}^0, \Pi_{D D'}^1\}$ by
 \begin{align*}
  \Pi^1_{D D'} = \proj{1}_{D} \otimes \proj{1}_{D'}  \quad \text{and} \quad  \Pi_{D D'}^0 = \id_{D D'} - \Pi_{D D'}^1 
\end{align*}
It is then easily verified that the action of $\{M^0_{\mathrm{D}}, M^1_{\mathrm{D}}\}$ on any state $\sigma_{D}$ is reproduced by the action of $\{\Pi_{D D'}^0, \Pi_{D D'}^1\}$ on the product $\sigma_{D} \otimes \sigma_{D'}$ where
\begin{align} \label{eq:detectionsystem}
  \sigma_{D'} = (1-\mu) \proj{0}_{D'} + \mu \proj{1}_{D'} \ .
\end{align}
\end{example}

To assess the quality of a QRNG, we also need a description of its side information. While our modelling of QRNGs does not specify side information explicitly, the idea is to take into account any possible side information compatible with the model. To do so, we consider a purification $|\psi\rangle_{SE}$ of the state $\rho_S$ with purifying system $E$, as illustrated in Fig.~\ref{fig:model1}. Any possible side information $C$ may now be described as the outcome of a measurement on $E$.

\begin{figure}[h]
\begin{tikzpicture}[scale=0.9]
\scriptsize
\draw (0,0)--(3,0)--(3,2)--(0,2)--(0,0);
\draw (0.5,0.5)--(2.5,0.5)--(2.5,1.5)--(0.5,1.5)--(0.5,0.5);
\node at (1.5,1.25){Projective};
\node at (1.5,0.75){Measurement};
\draw[->] (2.5,3)--(1.5,2.05);
\draw[->] (2.5,3)--(3.5,2.05);
\draw[->] (1.5,0.25)--(1.5,-0.5);
\draw[->] (1.5,1.95)--(1.5,1.55);
\node at (2,-0.5){$X$};
\node at (2.5,3.3){$|\psi\rangle_{SE}$};
\node at (1.8,2.75){$S$};
\node at (3.2,2.75){$E$};
\end{tikzpicture}
\caption{\emph{Side information.} For a QRNG, defined by a projective measurement on a system $S$ with outcome $X$, all side information can be obtained from a purifying system $E$, i.e., an extra system that is chosen such that the joint state on $S$ and $E$ is pure.}
\label{fig:model1}
\end{figure}
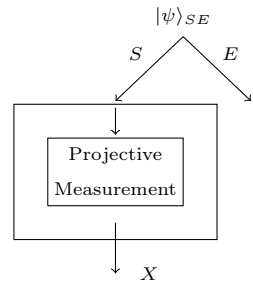

\begin{example}[Side information about inefficient detector] \label{ex:detectionbit}
Consider an inefficient detector $D=D_v$ or ${D=D_h}$ as in Example~\ref{ex:inefficient}. A classical bit $R$ may determine whether the detector is sensitive to incoming photons ($R = 1$) or not ($R = 0$). $R$ could then be considered as side information $W$. This information can indeed be easily obtained from a measurement on a purification of the state $\sigma_{D'}$ defined by~\eqref{eq:detectionsystem}. For example, for the purification
\begin{align*}
  \ket{\phi}_{D'  E} = \sqrt{1-\mu} \ket{0}_{D'} \otimes \ket{0}_E + \sqrt{\mu} \ket{1}_{D'} \otimes \ket{1}_E \ ,
\end{align*}
the value $R$ is retrieved as the outcome of the projective measurement $\{\proj{0}_E, \proj{1}_E\}$ applied to $E$.
\end{example}

The next statement is essentially a recasting of known facts about randomness extraction in the presence of quantum side information, combined with the fact that quantum theory is complete~\cite{colbeck11}. However, because it is central for our analysis, we formulate it as a proposition.

\begin{proposition} \label{pr:bound}
  Consider a QRNG that generates raw randomness $X$ and let $E$ be a purifying system of $S$. Furthermore, let $f$ be a function chosen uniformly at random and independently of all other values from a two-universal family of hash functions with output length \begin{align*}
  \ell \leq H_{\min}(X|E) - 2 \log(1/\epsilon) \ .
\end{align*}
Then the result $Z = f(X)$ is $\epsilon$-truly random. 
\end{proposition}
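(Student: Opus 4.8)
The plan is to reduce the proposition to the Leftover Hash Lemma with Side Information (Lemma~\ref{lhl}) together with the completeness of quantum theory. First I would apply Lemma~\ref{lhl} to the cq-state $\rho_{XE}$ arising from the QRNG: by Definition~\ref{def:QRNG} and the construction of Fig.~\ref{fig:model1}, $\rho_{XE}$ is obtained by purifying $\rho_S$ to $|\psi\rangle_{SE}$ and performing the projective measurement $\{\Pi_S^x\}$. Since $f$ is drawn from a two-universal family with output length $\ell$, the lemma gives
\begin{align*}
\frac{1}{2}\bigl\|\rho_{F(X)EF}-\rho_{\bar Z}\otimes\rho_{EF}\bigr\|_1 \leq 2^{-\frac{1}{2}(H_{\min}(X|E)-\ell)} .
\end{align*}
The hypothesis $\ell \leq H_{\min}(X|E)-2\log(1/\epsilon)$ means $H_{\min}(X|E)-\ell \geq 2\log(1/\epsilon)$, so the right-hand side is at most $2^{-\log(1/\epsilon)}=\epsilon$. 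Thus $Z=F(X)$ is within trace distance $\epsilon$ of being uniform and in product with the joint side-information register $EF$.

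Second, I would argue that this already yields $\epsilon$-true randomness in the sense of Definition~\ref{def_random}. That definition asks that $Z$ be $\epsilon$-close to uniform and $\epsilon$-uncorrelated with the collection $\Gamma_Z$ of all space time variables not in the future light cone of $Z$. Taking the coordinate of $Z=f(X)$ to be the event where the process generating $X$ is started (the post-processing $f$ being part of the QRNG), $\Gamma_Z$ is exactly the set of variables outside the future light cone of that event. The key step is to invoke completeness of quantum theory~\cite{colbeck11}: any such variable can, without loss of generality, be obtained as the outcome of a quantum channel applied to the purifying system $E$ together with the independently chosen register $F$. Indeed $E$ is by construction a purification of the entire preexisting state $S$, and $F$ is generated from independent randomness, so there is no room for an additional hidden variable correlated with $X$ that is not already captured by $(E,F)$. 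Concretely, for the joint tuple $\Gamma_Z$ there is a channel $\mathcal{E}_{EF\to \Gamma_Z}$ with $P_{Z\Gamma_Z}=(\id_Z\otimes\mathcal{E}_{EF\to\Gamma_Z})(\rho_{F(X)EF})$ and $P_{\bar Z}\times P_{\Gamma_Z}=(\id_Z\otimes\mathcal{E}_{EF\to\Gamma_Z})(\rho_{\bar Z}\otimes\rho_{EF})$, the product structure being preserved because the $Z$-part is untouched.

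Third, I would use the fact (noted in the paragraph on the trace norm and again after~\eqref{eq:DPI}) that the trace norm cannot increase under a physical map on $EF$. Applying $\mathcal{E}_{EF\to\Gamma_Z}$ to both arguments above gives
\begin{align*}
\frac{1}{2}\bigl\|P_{Z\Gamma_Z}-P_{\bar Z}\times P_{\Gamma_Z}\bigr\|_1 \leq \frac{1}{2}\bigl\|\rho_{F(X)EF}-\rho_{\bar Z}\otimes\rho_{EF}\bigr\|_1 \leq \epsilon ,
\end{align*}
which is precisely condition~\eqref{eq:rand} of Definition~\ref{def_random}; the marginal statement $\tfrac12\|P_Z-P_{\bar Z}\|_1\le\epsilon$ is the special case of a trivial $\Gamma_Z$. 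Hence $Z$ is $\epsilon$-truly random.

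I expect the main obstacle to be the second step: making precise that \emph{every} space time variable outside the future light cone of $Z$ is reproducible from $(E,F)$. This is where the completeness result of~\cite{colbeck11} does the real work — it excludes any hidden variable correlated with $X$ that a measurement on the purification cannot simulate. Care is needed with the relativistic bookkeeping (one must be able to place the purifying system in the past or elsewhere of the generation event, so that ``measuring $E$'' is a legitimate operation) and with the status of $F$, which is harmless exactly because it is produced from independent randomness and is already part of the conditioning system in Lemma~\ref{lhl}.
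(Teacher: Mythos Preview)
Your proposal is correct and follows essentially the same route as the paper's proof sketch: both combine the Leftover Hash Lemma (Lemma~\ref{lhl}), the completeness result of~\cite{colbeck11} to reduce arbitrary side information $W$ (or $\Gamma_Z$) to a measurement on the purifying system $E$, and monotonicity of the trace distance under physical maps. The only cosmetic difference is the order of presentation --- you apply Lemma~\ref{lhl} first and then pass to classical side information, whereas the paper first reformulates the target inequality~\eqref{eq:rand} in terms of $W$, then reduces to the quantum condition on $\rho_{F(X)EF}$, and finally invokes Lemma~\ref{lhl}; you also phrase the reduction via a single channel $\mathcal{E}_{EF\to\Gamma_Z}$ rather than a measurement producing a generic $W$, but this is an equivalent packaging of the same step.
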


\begin{proof}[Proof Sketch]
According to the definition of $\epsilon$-true randomness we need to ensure that 
\begin{align} \label{eq:proofdistance}
\frac{1}{2} \bigl\| P_{F(X) W F}-P_{\bar{Z}} \times P_{W F}  \bigr\|_1 \leq \epsilon \ ,
\end{align}
where $W$ is any value that is available outside the future light cone of the event where the measurement $X$ is started,\footnote{Technically, we demand that $W$ is defined within some model compatible with quantum theory and with free choice (see~\cite{colbeck11}).} where $F$ is the random variable indicating the uniform choice of the hash function from the two-universal family, and where $P_{\bar{Z}}$ is the uniform distribution on $\{0, 1\}^{\ell}$. It follows from the completeness of quantum theory~\cite{colbeck11} that such a $W$ can always be obtained by a measurement of all available quantum systems, in our case $E$.\footnote{To see this consider $X$ as the random variable obtained by measuring $\{\Pi_S^x\otimes \id_E\}_{x\in\mathcal{X}}$ on the pure system $SE$. From the completeness it follows that $X$ cannot be predicted better within any extended theory than within quantum mechanics. Within quantum mechanics maximal information about $X$ is obtained from a measurement on the purifying system $E$.}  But because the trace distance can only decrease under physical mappings, \eqref{eq:proofdistance} holds whenever 
\begin{align*}
\frac{1}{2} \bigl\| \rho_{F(X) E F}-\rho_{\bar{Z}} \otimes \rho_{E F}  \bigr\|_1 \leq \epsilon \ .
\end{align*}
The claim then follows by the Leftover Hash Lemma (see Section~\ref{sec:hash}.)
\end{proof}

Note that Proposition~\ref{pr:bound} does not require a description of classical side information, i.e., there is no need to model the side information explicitly. This is important for practice, as it could be hard to find an explicit and complete model for all classical side information present in a realistic device. 

\subsection{Maximum classical noise model} 

Proposition~\ref{pr:bound} provides us with a criterion to determine the amount of true randomness that can be extracted from the output of a noisy QRNG. However, the criterion involves the conditional min-entropy for quantum systems, which may be hard to evaluate for practical devices. In the following, we are seeking for an alternative criterion that involves only classical quantities. The rough idea is to find a classical value $C$ which is as good as the side information $E$, in the sense that
\begin{align} \label{eq:conv}
H_{\min}(X|C)\leq H_{\min}(X|E)
\end{align}
holds. 

The random variable $C$ may be obtained by a measurement on the system $S$, but this measurement must not  interfere with the measurement carried out by the QRNG.  Furthermore, \eqref{eq:conv} can only hold if the measurement of $C$ is maximally informative. Technically, this means that the post-measurement state should be pure conditioned on $C$. This motivates the following definition (see Fig.~\ref{fig:model2}).

\begin{definition}\label{def:C}
A \emph{maximum classical noise model} for a QRNG with state $\rho_S$ and projective measurement $\{\Pi_S^x\}_x$ on $S$ is a generalised measurement\footnote{A \emph{generalised measurement} on $S$ is defined by a family of operators $\{E_S^c\}_c$ such that $\sum_c (E_S^c)^{\dagger} E_S^c = \id_S$.} $\{E_S^c\}_{c\in\mathcal{C}}$ on $S$ such that the following requirements are satisfied:
\begin{enumerate}
\item the map 
\begin{align*}
\mathcal{P}_{X\leftarrow S}: \quad \sigma_S\mapsto \sum_{x}\tr\left(\Pi_S^x\sigma_S\right)|x\rangle\langle x| \ ,
\end{align*}
is invariant under composition with the map
\begin{align*}
\mathcal{E}_{S\leftarrow S}:  \quad \sigma_S\mapsto \sum_c E_S^c\sigma_S (E_S^c)^{\dagger}
\end{align*}
 i.e., $\mathcal{P}_{X\leftarrow S}\circ\mathcal{E}_{S\leftarrow S}=\mathcal{P}_{X\leftarrow S}$;
\item  the state 
\[\rho_{S|C=c}=\frac{(E_S^c)^{\dagger}\rho_S E_S^c}{\tr((E_S^c)^{\dagger} \rho_SE_S^c)}\]
 obtained by conditioning on the outcome $C=c$  of the  measurement $\{E_S^c\}_{c}$ is pure, for any $c \in \mathcal{C}$.
\end{enumerate}
The outcome $C$ of the measurement $\{E_S^c\}_{c}$ applied to $\rho_S$ is called \emph{maximum classical noise}.  
\end{definition}

\begin{figure}[h]
\begin{tikzpicture}[scale=0.9]
\scriptsize
\draw (0,0)--(3,0)--(3,4)--(0,4)--(0,0);
\draw (0.5,0.5)--(2.5,0.5)--(2.5,1.5)--(0.5,1.5)--(0.5,0.5);
\draw (0.5,2.5)--(2.5,2.5)--(2.5,3.5)--(0.5,3.5)--(0.5,2.5);
\node at (1.5,1){\{$\Pi_S^x\}_x$};
\node at (1.5,3){$\{E_S^c\}_{c}$};
\draw[->] (1.5,3.9)--(1.5,3.6);
\draw[->] (2.5,5)--(1.5,4.05);
\draw[->] (2.5,5)--(3.5,4.05);
\draw[->] (1.5,0.25)--(1.5,-0.5);
\draw[->,decorate,decoration={snake,post length=1 mm,amplitude=1mm,segment length=6mm}] (2.5,3)--(4,3);
\node at (4.25,3) {$C$};
\node at (2,-0.5){$X$};
\node at (2.5,5.3){$|\psi\rangle_{SE}$};
\node at (1.8,4.75){$S$};
\node at (3.2,4.75){$E$};
\draw[->] (1.5,2.4)--(1.5,1.7);
\node at (2.25,2) {$S'=S$};
\end{tikzpicture}
\caption{\emph{Classical noise model.} The maximum classical noise $C$ of a QRNG is defined by a measurement on $S$ that does not affect the projective measurement carried out by the QRNG, but gives maximal information about the raw randomness, $X$.}
\label{fig:model2}
\end{figure}
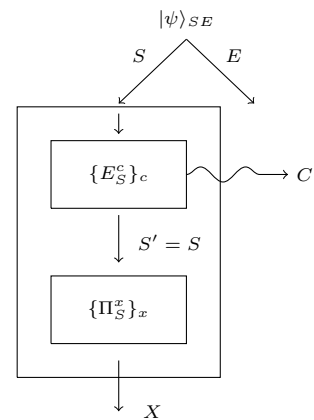

\begin{example}[Maximum classical noise model for inefficient detector] \label{ex:maxnoise}
  Consider again an inefficient detector as defined in Example~\ref{ex:inefficient} and its description in terms of a projective measurement $\{\Pi_{D D'}^0, \Pi_{D D'}^1\}$ on an extended system. If the state $\rho_{D}$ of the optical mode is pure then the measurement $\{E_{D D'}^{0}, E_{D D'}^{1} \}$ defined by 
\begin{align*}
   E_{D D'}^{0} = \id_D \otimes \proj{0}_{D'} \quad \text{and} \quad E_{D D'}^{1} = \id_D \otimes \proj{1}_{D'} 
\end{align*} 
is a maximum classical noise model. 
 To see this, note that the first criterion of Definition~\ref{def:C} is satisfied because this measurement commutes with the measurement $\{\Pi_{D D'}^0, \Pi_{D D'}^1\}$ of the detector. Furthermore, because $\{E_{D D'}^{0}, E_{D D'}^{1} \}$ restricted to $D'$ is a rank-one measurement, the post-measurement state is pure, so that the second criterion of Definition~\ref{def:C}  is also satisfied. Note that the maximum classical noise, $C$, defined as the outcome of the measurement $\{E_{D D'}^{0}, E_{D D'}^{1} \}$, is a bit that indicates whether the detector is sensitive or not, as in Example~\ref{ex:detectionbit}. For the PBS-based QRNG with two detectors, $D_v$ and $D_h$, the classical noise would be $C = (R_v, R_h)$, where $R_v$ and $R_h$ are the corresponding indicator bits for each detector. 
\end{example}

We remark that the definition of a maximum classical noise model is not unique. But this is irrelevant, for its main use is to provide a lower bound on $H_{\min}(X|E)$ and therefore (by virtue of Proposition~\ref{pr:bound}) on the number of truly random bits that can be obtained by hashing. 

\begin{lemma} \label{lem:mbound}
Consider a QRNG that generates raw randomness $X$ and let $E$ be a purifying system. Then, for any maximum classical noise $C$, 
\begin{align}\label{eq:mbound} 
  H_{\min}(X|C) \leq H_{\min}(X|E)  \ .
\end{align} 
\end{lemma}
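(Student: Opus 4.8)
The plan is to establish the chain
$H_{\min}(X|C) = H_{\min}(X|CE) \le H_{\min}(X|E)$,
where $E$ is the purifying system and $CE$ denotes the joint (classical--quantum) side information consisting of the maximum classical noise $C$ together with $E$. The first equality says that, once $C$ is known, the purification $E$ is useless for predicting $X$; the inequality is then just data processing. Throughout, fix a purification $|\psi\rangle_{SE}$ of $\rho_S$, so that the cq-state $\rho_{XE}$ underlying $H_{\min}(X|E)$ is obtained by applying $\{\Pi_S^x\otimes\id_E\}_x$ to $|\psi\rangle_{SE}$.

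The key structural observation concerns the state obtained after the $C$-measurement. Performing $\{E_S^c\}_c$ acts only on $S$, so the post-measurement state $\rho_{SE|C=c}$ of the bipartite system is still pure; by requirement~(2) of Definition~\ref{def:C} its marginal $\rho_{S|C=c}$ is pure as well, hence a pure bipartite state with a pure marginal has Schmidt rank one and factorises as $\rho_{SE|C=c}=\rho_{S|C=c}\otimes\tau_E^c$ for some state $\tau_E^c$ on $E$. Applying the QRNG measurement $\{\Pi_S^x\}_x$ afterwards touches only the $S$-part, producing $X$ with $P_{X|C=c}(x)=\tr(\Pi_S^x\rho_{S|C=c})$ while $E$ remains in $\tau_E^c$. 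Thus the overall state has the block form $\rho_{XCE}=\sum_c P_C(c)\,\proj{c}_C\otimes\rho_{X|C=c}\otimes\tau_E^c$, i.e.\ $X$ and $E$ are independent given $C$. Here requirement~(1) is used to guarantee that inserting the $C$-measurement before the $\Pi$-measurement does not alter $\rho_{XE}$: the condition $\mathcal{P}_{X\leftarrow S}\circ\mathcal{E}_{S\leftarrow S}=\mathcal{P}_{X\leftarrow S}$ is equivalent to $\mathcal{E}_{S\leftarrow S}$ leaving each $\Pi_S^x$ invariant under its adjoint, which makes the reduced $XE$-state invariant, so the $H_{\min}(X|E)$ in the statement is indeed the one read off from $\rho_{XCE}$.

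Next I would show $H_{\min}(X|CE)=H_{\min}(X|C)$. By the operational characterisation of the conditional min-entropy~\cite{koenig09}, $2^{-H_{\min}(X|CE)}$ is the maximal probability of guessing $X$ from $C$ and $E$; since $C$ is classical and, given $C=c$, the state $\tau_E^c$ of $E$ is independent of $x$, no measurement on $E$ reveals anything about $X$ beyond $C$, so the optimum is $\sum_c P_C(c)\max_x P_{X|C=c}(x)=2^{-H_{\min}(X|C)}$. (Equivalently, in the semidefinite program \eqref{HminQ} one may take the optimal $\sigma_{CE}$ of the block form $\sum_c q_c\proj{c}_C\otimes\tau_E^c$, reducing the constraint to the classical one.) Finally, applying the data-processing inequality~\eqref{eq:DPI} by discarding the classical register $C$ gives $H_{\min}(X|CE)\le H_{\min}(X|E)$, and combining the two facts yields \eqref{eq:mbound}.

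I expect the main (conceptual) obstacle to be the very first step --- recognising that conditioning on a measurement performed on $S$ alone that leaves $S$ in a pure state also decouples the purifying system $E$, and checking that requirement~(1) is exactly what is needed to make $\rho_{XE}$ insensitive to this extra measurement. Once the block structure of $\rho_{XCE}$ is in place, the remainder is a routine combination of the operational meaning of the min-entropy and the data-processing inequality.
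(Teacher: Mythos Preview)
Your proof is correct and follows essentially the same approach as the paper: both establish the conditional independence $X\leftrightarrow C\leftrightarrow E$ (your ``block form'' of $\rho_{XCE}$), deduce $H_{\min}(X|C)=H_{\min}(X|CE)$, and then apply the data-processing inequality~\eqref{eq:DPI}. You spell out in more detail than the paper why requirement~(1) ensures that the $XE$-marginal of $\rho_{XCE}$ coincides with the original $\rho_{XE}$ and why requirement~(2) forces the product structure on $SE$ conditioned on $C$, but the skeleton is identical.
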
 

\begin{proof} The first requirement of Definition~\ref{def:C} guarantees that the random variables $C$ and $X$ are defined simultaneously. Because, by the second requirement of Definition~\ref{def:C}, the state of $S$ conditioned on $C$ is pure, it is necessarily independent of $E$. Since $X$ is obtained by a measurement on $S$, it is also independent of $E$, conditioned on $C$. Hence we have the Markov chain 
\begin{align*} 
X\leftrightarrow C\leftrightarrow E \ ,
\end{align*}
which implies 
\begin{align*}
  H_{\min}(X|C)= H_{\min}(X|CE) \ .
\end{align*}
The assertion then follows from the data processing inequality for the min-entropy~\eqref{eq:DPI}, 
\begin{align*}
  H_{\min}(X|C E) \leq H_{\min}(X|E) \ .
\end{align*}
\end{proof}

From the joint probability distribution determined by the Born rule
\begin{equation}\label{eq:Born2}
P_{XC}(x,c)=\tr(\Pi_S^xE_S^c\rho_S(E_S^c)^{\dagger}) \ ,
\end{equation}
the conditional min-entropy $H_{\min}(X|C)$ can be calculated using~\eqref{eq:condHmin1} and~\eqref{eq:Hmin}.

\begin{example}[Extractable randomness for a detector with efficiency $\mu$]
Given the maximum classical noise model from Example~\ref{ex:maxnoise}, we can easily calculate the conditional min-entropy of the measurement outcome $X$. For example, assuming that the optical mode $D$ carries with equal probability no or one photon, we find
\begin{align*}
H_{\min}(X|C)&=-\log[P_C(0)\cdot 1+P_C(1)\cdot\frac{1}{2}]\\
&=-\log[(1-\mu)\cdot 1+\mu\cdot\frac{1}{2}] \ .
\end{align*}
\end{example}

\subsection{Quantum randomness} 

When analysing realistic QRNGs, it is convenient to describe them in terms of purely classical random variables. As we have already seen above, side information can be captured most generally by a maximum classical noise model and, hence, a random variable $C$ (see Definition~\ref{def:C}). Similarly, we may introduce a random variable, $Q$, that accounts for the ``quantum randomness'', i.e., the part of the randomness that is intrinsically unpredictable. The idea is to define this as the randomness that ``remains'' after accounting for the maximum classical noise~$C$ (see Fig.~\ref{fig:model3}). 

\begin{definition} \label{def:QRM}
  Consider a QRNG that generates raw randomness $X$ and let $C$ be maximum classical noise, jointly distributed according to $P_{X C}$. Let $P_Q$ be a probability distribution and let $\chi: \, (q,c) \mapsto x$ be a function  such that
  \begin{align*}
    P_{X C} = P_{\chi(Q,C) C} \ ,
  \end{align*}
  where the distribution on the r.h.s.\ is defined by
  \begin{align*}
    P_{\chi(Q,C) C}(x,c) = \sum_{q: \, \chi(q,c) = x} P_{Q}(q) P_C(c) \ . 
  \end{align*}
  The corresponding random variable $Q$ is called \emph{quantum randomness}. 
\end{definition}

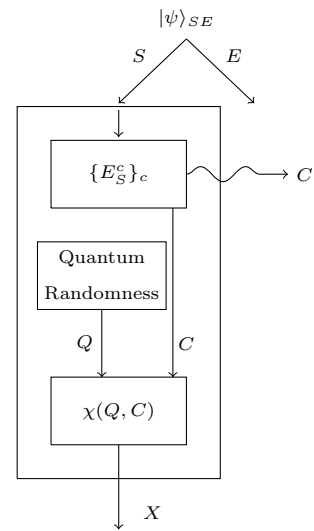
\begin{figure}[h]
\begin{tikzpicture}[scale=0.9]
\scriptsize
\draw (0,0.5)--(3,0.5)--(3,6)--(0,6)--(0,0.5);
\draw (0.5,4.5)--(2.5,4.5)--(2.5,5.5)--(0.5,5.5)--(0.5,4.5);
\draw[->] (1.5,5.95)--(1.5,5.55);
\node at (1.5,5){$\{E_S^c\}_{c}$};
\draw[->,decorate,decoration={snake,post length=1 mm,amplitude=1mm,segment length=6mm}] (2.5,5)--(4,5);
\draw[->] (2.3,4.5)--(2.3,2);
\node at (4.25,5){$C$};
\draw (0.3,3)--(2.2,3)--(2.2,4)--(0.3,4)--(0.3,3);
\node at (1.25,3.75){Quantum};
\node at (1.25,3.25){Randomness};
\node at (1,2.5){$Q$};
\node at (2.5,2.5){$C$};

\draw[->] (1.25,3)--(1.25,2);

\draw (0.5,1)--(2.5,1)--(2.5,2)--(0.5,2)--(0.5,1);

\draw[->] (2.5,7)--(1.5,6.05);
\draw[->] (2.5,7)--(3.5,6.05);

\node at (1.5,1.5){$\chi(Q,C)$};

\draw[->] (1.5,1)--(1.5,-0.25);

\node at (2,0){$X$};
\node at (2.5,7.3){$|\psi\rangle_{SE}$};
\node at (1.8,6.75){$S$};
\node at (3.2,6.75){$E$};

\end{tikzpicture}
\caption{\emph{Quantum randomness.} The raw randomness $X$ can be seen as a function $\chi$ of the quantum randomness $Q$ and  the classical noise $C$. This allows us to replace the real device from Fig.~\ref{fig:model1} by a model based on classical random variables.}
\label{fig:model3}
\end{figure}

\begin{example}[Quantum randomness of the PBS-based QRNG]\label{ex:qrand}

  The quantum randomness of the PBS-based QRNG of Example~\ref{ex:idealPBS} may be defined as the path that the photon takes after the PBS  (i.e., whether it travels to $D_v$ or $D_h$).  For a single diagonally polarised photon, $Q$ would therefore be a uniformly distributed bit. Then, for  inefficient detectors with maximum classical noise $R_v$ and $R_h$ defined as in  Example~\ref{ex:maxnoise}, the function $\chi: \, (q, r_v, r_h) \mapsto x=(x_v, x_h)$ is given by 
\begin{align*}
  \chi(q, r_v, r_h) = \begin{cases} (r_v, 0) & \text{if $q = v$} \\ (0, r_h) & \text{if $q=h$.} \end{cases}
\end{align*}
\end{example}

\section{Example: analysis of a noisy PBS-based QRNG}\label{example}

To illustrate the effect of noise on true randomness generation, we study, as an example, a PBS-based QRNG with two detectors. A realistic description of this QRNG would take into account that the photon detectors are subject to dark counts and cross talk, and that their efficiency generally depends on the number of incoming photons. While an analysis based on such a more realistic model is provided in Appendix~\ref{quantis}, we consider here a simplified model where the two detectors, $D_v$ and $D_h$, are assumed not to click if there is no incoming photon and click with constant probability $\mu$ in the presence of one or more incoming photons. Fig.~\ref{QRNG} schematically illustrates the working of our QRNG. The raw randomness consists of bit pairs $X=(X_v,X_h)\in\{00,01,10,11\}$, where

\begin{equation}\label{eq:PBS}
X_{v,h}=\left\{\begin{array}{ll}1&\text{if } D_{v,h} \text{ clicks.}\\0&\text{else}\end{array}\right.
\end{equation}

\begin{figure}[h]
\begin{tikzpicture}
\scriptsize
\draw[->,decorate,decoration={snake,post length=1 mm,amplitude=1mm,segment length=6mm}] (1,0)--(3.8,0);
\draw[line width=0.4mm] (3.75,-0.5) --(4.25,0.5);
\draw[->,decorate,decoration={snake,post length=1 mm,amplitude=1mm,segment length=6mm}] (4,0.2)--(4,2);
\draw[->,decorate,decoration={snake,post length=1 mm,amplitude=1mm,segment length=6mm}] (4.2,0)--(6,0);
\draw (4.5,3) arc(0:180:0.5);
\draw (4.5,3)--(4.5,2.2)--(3.5,2.2)--(3.5,3);
\draw (7,-0.5) arc(-90:90:0.5);
\draw (7,-0.5)--(6.2,-0.5)--(6.2,0.5)--(7,0.5);
\node at (2.2,0.25) {$n$};
\node at (0.5,1.8) {\normalsize{$\boxed{P_N(n)=e^{-|\alpha|^2}\frac{|\alpha|^{2n}}{n!}}$}};
\node at (5.2,0.25) {$m$};
\node at (3.4,1.1) {$n-m$};
\node at (4,2.8) {{\normalsize $D_v$}};
\node at (6.8,0)  {{\normalsize $D_h$}};
\end{tikzpicture} 
\caption{\emph{Noisy PBS-based QRNG.}  A source emits pulses of photons. The photon numbers are typically following a Poisson distribution, $P_N$.}\label{QRNG}
\end{figure}
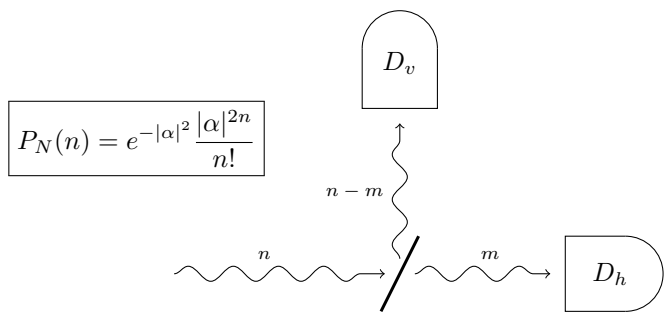
For our model, we assume that the source emits $n\in\{0,\dots,\infty\}$ photons according to the Poisson distribution\footnote{Note that, for realistic sources (such as lasers or LEDs), the photon numbers between subsequent pulses may be correlated, e.g., due to photon (anti-)bunching. However, because we model the source as a mixed density operator over Fock states, we automatically include knowledge about the exact photon number as side information, i.e., the extracted randomness will be uniform even if this number is known. Furthermore, the effect of bunching on the overall frequency of the photon numbers (which we assume here to follow a Poisson distribution) is usually much smaller than other imperfections, so that it can be safely ignored in our analysis.}
\begin{align}
P_N(n)=e^{-|\alpha|^2}\frac{|\alpha|^{2n}}{n!}\; .
\label{eq:poisson}
\end{align}
The outcome $x=(1,1)$ occurs if the following conditions are satisfied:
\begin{itemize}
\item[-] The source emits at least two photons.
\item[-] After the interaction with the PBS photons are in both paths. This happens with probability 
\[1-2\left(\frac{1}{2}\right)^n,\]
because $\left(\frac{1}{2}\right)^n$ is the probability that all photons end up in the same path.
\item[-] Both detectors are sensitive (which happens with probability $\mu^2$.
\end{itemize}
Therefore the probability to obtain $x=(1,1)$ is given by
\[P_X(1,1)=\sum\limits_{n=2}^{\infty}P_N(n)\left(1-2\cdot\Big(\frac{1}{2}\Big)^n\right)\cdot\mu^2.\]
The remaining probabilities are determined by analogous considerations. They are summarised in Table~\ref{distrwithoutSI} in Appendix~\ref{HXRN}.\\
Following our framework, we start by modelling the device according to Definition~\ref{def:QRNG}, i.e., we define an input state $\rho_S$ and a projective measurement such that $P_X$ is reproduced by the Born rule \eqref{eq:Born}.

\subsection{Definition of the QRNG}

To define the QRNG, we need to specify a density operator $\rho_S$ (corresponding to the state before measurement) and a measurement $\{\Pi^x_S\}$. We start with the description of the density operator. 

Analogous to Example~\ref{ex:inefficient} we consider an extended space with additional subsystems that determine the number of incoming photons and whether the two detectors are sensitive.
\begin{itemize}
\item A subsystem $I$ encodes the intensity of the source, in terms of the photon number, $n$,  in states $|n\rangle$ with $n\in\{0,\ldots,\infty\}$, where $n$ is distributed according to the Poisson distribution \eqref{eq:poisson}.
\item For each of the $n$ photons emitted by the source the two light modes $D_v$ and $D_h$ travelling to the respective detectors have state $|\phi\rangle_{D_vD_h}$, as defined in Example~\ref{ex:idealPBS}.
\item Two subsystems $D'_v$ and $D'_h$, prepared in states $|r_{v}\rangle$ and $|r_{h}\rangle$ (see Example~\ref{ex:inefficient}) determine whether the respective detectors are sensitive ($r_{v,h}=1$) or not ($r_{v,h}=0$).
\end{itemize}
The state $\rho_S$ of the total system is thus given by
\begin{equation*}
\sum_{n,r_v,r_h}P_N(n)P_{R_v}(r_v)P_{R_h}(r_h)\proj{n} \otimes\proj{r_v} \otimes \proj{r_h} \otimes \proj{\phi} ^{\otimes n},
\end{equation*}
where $P_{R_{v,h}}(1)=\mu$ and where we omitted the subscripts for simplicity.

Note that by writing the state of the $n$ photons in tensor product form $|\phi\rangle^{\otimes n}_{D_vD_h}$, it is assumed that they are distinguishable particles, even though photons are fundamentally indistinguishable. However, it turns out that photons behave in beam-splitting experiments as if they were in-principle distinguishable (see for example~\cite{leonard03}).

To define the measurement $\{\Pi_{S}^{x}\}_{x\in\{11,10,01,00\}}$, let us consider each detector individually. Looking at $D_v$ the detector clicks if it is sensitive and if there is at least one photon in the corresponding path. For $n$ incoming photons, the latter criterion is determined by the two operators $\{P_v^{n,0},P_v^{n,1}\}$, where
\begin{align*}
P_v^{n,0}:=|0\rangle\langle 0|_{D_v}^{\otimes n} \quad \text{and} \quad 
P_v^{n,1}:=\id_{D_v}^{\otimes n}-|0\rangle\langle 0|_{D_v}^{\otimes n}
\end{align*}
correspond to the two cases where no, or at least one, photon is in the path going to $D_v$, respectively. For the other detector, $D_h$, we define $\{P_h^{n,0},P_h^{n,1}\}$ analogously. The measurement projectors are then given by
\begin{equation*}
\begin{array}{lll}
\Pi_{S}^{11}=\sum\limits_{n\in\{0,\infty\}}&&\hspace{-0.5cm}|n,1,1\rangle\langle n,1,1|_{ID'_vD'_h}\otimes P_v^{n,1}\otimes P_h^{n,1}\\[0.8cm]

\Pi_{S}^{10}=\sum\limits_{n\in\{0,\infty\}}&\Big[&|n,1,1\rangle\langle n,1,1|_{ID'_vD'_h}\otimes P_v^{n,1}\otimes P_h^{n,0}\\
&+&|n,1,0\rangle\langle n,1,0|_{ID'_vD'_h}\otimes P_v^{n,1}\otimes\id_{D_h}^{\otimes n}\Big]\\[0.8cm]

\Pi_{S}^{01}=\sum\limits_{n\in\{0,\infty\}}&\Big[&|n,1,1\rangle\langle n,1,1|_{ID'_vD'_h}\otimes P_v^{n,0}\otimes P_h^{n,1}\\
&+&|n,0,1\rangle\langle n,0,1|_{ID'_vD'_h}\otimes \id_{D_v}^{\otimes n}\otimes P_h^{n,1}\Big]\\[0.8cm]

\Pi_{S}^{00}=&&\hspace{-1.5cm}\id_S-\Pi_{S}^{11}-\Pi_{S}^{10}-\Pi_{S}^{01}.
\end{array}
\end{equation*}

The probability distribution $P_X$ of  the raw randomness, as obtained by applying the Born rule~\eqref{eq:Born}  to this state and measurement, is shown in Table~\ref{distrwithoutSI} of Appendix~\ref{HXRN}.

\subsection{Maximum classical noise model for the QRNG}

A possible maximum classical noise model for the QNRG is $\{E_S^{nr_vr_h} \}_{nr_vr_h}$ defined by\footnote{Note that the subsystems carrying the states of $D_h$ and $D_s$ should be interpreted as Fock spaces.}
\begin{align*}
E_{S}^{nr_vr_h} =   \proj{n}_{I} \otimes \proj{r_v}_{D'_v} \otimes \proj{r_h}_{D'_h}\otimes\id_{D_vD_h}^{\otimes n}
\end{align*} 

The classical noise, defined as the outcome of the measurement $\{E_S^{nr_vr_h} \}_{nr_vr_h}$, are the following three random variables
\begin{itemize}
\item $N$ with values  $n\in\{0,\ldots,\infty\}$ distributed according to the Poisson distribution \eqref{eq:poisson}. It corresponds to the side information about the number of photons emitted by the source.
\item $R_{v,h}$ with outcomes $r_{v,h}\in\{0,1\}$ distributed according to $P_{R_{v,h}}$. These two random variables encode the side information about the sensitivity of the two detectors.
\end{itemize}

As in Example~\ref{ex:maxnoise} this is a maximum classical noise model. The first criterion of Definition~\ref{def:C} is satisfied because the measurement $\{E_S^{nr_vr_h} \}_{nr_vr_h}$ commutes with the measurement  $\{\Pi_{S}^{x}\}_{x\in\{11,10,01,00\}}$. The second criterion of Definition~\ref{def:C}  is also satisfied, because $\{E_S^{nr_vr_h} \}_{nr_vr_h}$ restricted to $ID'_vD'_h$ is a rank-one measurement and because the state on the systems $D_vD_h$ is pure, so the post-measurement state is pure as well.

The total classical noise is the joint random variable $C=NR_vR_h$. The joint probability distribution $P_{XC}$ is given by the Born rule~\eqref{eq:Born2}, from which we obtain the conditional probability distribution $P_{X|C}$. It is summarised in Table~\ref{distrwithSI} of Appendix~\ref{HXRN}. This then allows us to calculate  $H_{\min}(X|C)$ using \eqref{eq:Hmin}, giving us a lower bound for $H_{\min}(X|E)$ according to Lemma~\ref{lem:mbound}. By virtue of Proposition~\ref{pr:bound} it is therefore a lower bound on the extractable true randomness. Fig.~\ref{fig:H} shows this bound for the specific value of $\mu=0.1$. An upper bound is given by the Shannon entropy $H(X|C)$ (see Appendix~\ref{vN}). The true value of the extractable randomness lies therefore somewhere in the blue shaded area. For comparison the unconditional min-entropy $H_{\min}(X)$ is shown, which corresponds to the extractable rate of uniformly distributed bits. The corresponding calculations can be found in Appendix~\ref{HXRN}.

\begin{figure}
\includegraphics[scale=0.6]{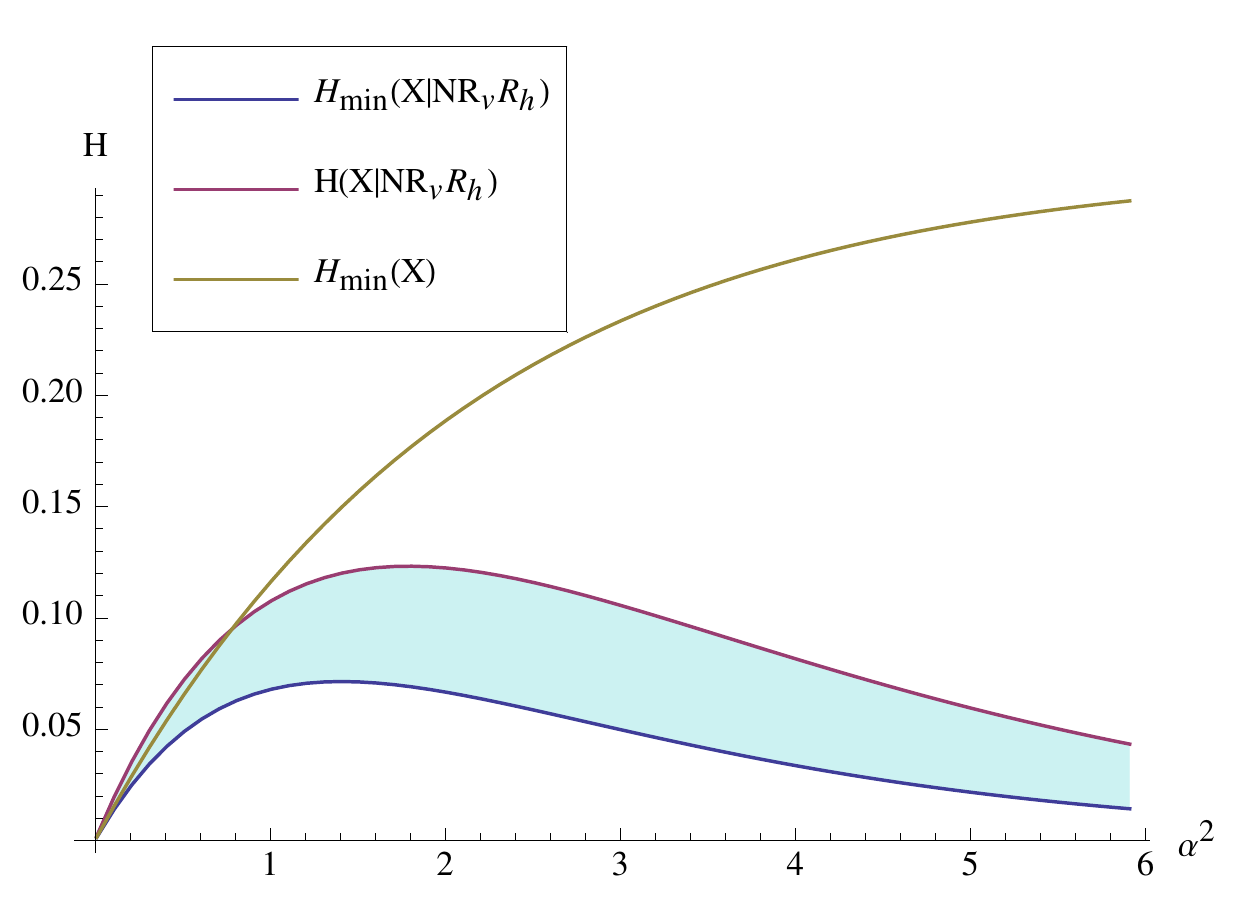}
\caption{\emph{Bounds for the extractable true randomness (for $\mu=0.1$).} The min-entropy, $H_{\min}(X|NR_vR_h)$, of the raw randomness corresponds to a lower bound for the extractable rate of truly random bits. The upper bound is given by the Shannon Entropy $H(X|NR_vR_h)$. Therefore, the amount of true randomness lies in the blue area. For comparison $H_{\min}(X)$ is shown, which corresponds to the extractable rate of uniformly distributed  (but not necessarily truly random) bits.}\label{fig:H}
\end{figure}

It can be seen that $H_{\min}(X)$ reaches a maximum value in the high intensity regime of  approximately $-\log_2((1-\mu)^2)$ which corresponds to the logarithm of the guessing probability for the most likely outcome $(0,0)$. Crucially, however, there is almost no true randomness left in this regime. This reflects the fact that an adversary having access to the information whether the detectors are sensitive or not can guess the outcome with high probability. In fact, in this regime the raw randomness will be almost independent of the quantum process, but only depend on the behaviour of the detectors. Therefore, the device does actually not correspond to a PBS-based QRNG but rather to an RNG based on (potentially classical) noise.

\subsection{Quantum randomness}

Analogously to Example~\ref{ex:qrand} the quantum randomness may be defined as the path the photons take after the PBS. The difference is that now there is not exactly one incoming photon but $n\in\{0,\ldots,\infty\}$. For each photon the quantum randomness, $Q$, is still a uniformly distributed bit $q\in\{h,v\}$. Because the number of incoming photons is not fixed, we define the quantum randomness by sequence of random variables $Q^{\infty} = (Q_1, Q_2, \ldots, )$, with distribution given by
\begin{align*}
P_{Q_i|Q_{i-1}}(q)=\frac{1}{2} \ .
\end{align*}
Together with the function 
\begin{align*}
\chi:(r_v,r_h,n,q_1,q_2,\ldots)\mapsto (x_h,x_v)  \ ,
\end{align*}
where
\begin{align*}
x_{v}=\begin{cases} 1&\text{if } r_{v}=1 \text{ and } |\{i:q_i=v\}_{i\leq 1\leq n}|\geq 1\\0&\text{else }\end{cases}
\end{align*}
and, likewise for $x_h$, this satisfies Definition~\ref{def:QRM}.

\section{Conclusions}

For randomness to be usable in applications, e.g., for drawing the numbers of a lottery, an important criterion is that it is unpredictable for everyone. Unpredictability is however not a feature of individual values or their frequency distribution, and can therefore not be certified by statistical tests. Rather, unpredictability is a property of the process that generates the randomness. This idea is captured by the notion of ``true randomness'' (see Definition~\ref{def_random}). The definition demands that the output of the process is independent of all side information available when the process is started.

While certain ideal quantum processes are truly random, practical Quantum Random Number Generators (QRNGs) are usually not. The reason is that, due to imperfections, the raw output of realistic devices depends on additional degrees of freedom, which can in principle be known beforehand.  In this work we showed how to model such side information and account for it in the post-processing of the raw randomness. We hope that our framework is useful for the design of next-generation QRNGs that are truly random. 

\begin{acknowledgments}
The authors thank Nicolas Gisin, Volkher Scholz, and Damien Stucki  for discussions and insight. We are also grateful for the collaboration with IDQ. This project was funded by the CREx project and supported by SNSF through the National Centre of Competence in Research ``Quantum Science and Technology'' and through grant No.\ 200020-135048, and by the European Research Council through grant No.~258932.
\end{acknowledgments}
\newpage
\begin{appendix}\label{appendix}
\onecolumngrid

\section{Block-wise Hashing and the Statistical Error of the Seed}\label{sec:seed}

In this section it is shown that the statistical error related to the choice of the hash function is not multiplied with the number of blocks in the case of block-wise post-processing. This implies that the hash function can be chosen once and therefore in principle be hard-coded in the device. 

Note that we describe this argument for the case of classical side information, $C$. However, replacing all probability distributions by density operators, the argument can be easily generalised to the case of quantum side information. 

Let
\[f:\mathcal{X}\rightarrow\{0,1\}^{\ell}\]
be the selected hash function. Assume that we apply the function to $k$ blocks, corresponding to a random variable $X_1\ldots X_k$, where each $X_i$ is a random variable with a alphabet $\mathcal{X}$. The final distribution $f(X_1)\ldots f(X_k)$ is the concatenation of the $k$ post-processed blocks.\\
In practice the hash function is not chosen according to a perfectly random distribution $U_F$ and there is also a possible correlation to the source. Therefore, we define the statistical error related to the seed by
\begin{equation}
\epsilon_{\text{seed}}:=\|P_{X_1\ldots X_kCF}-P_{X_1\ldots X_k C}\times U_F\|_1,
\end{equation}
We will show that the following bound holds
\begin{equation}
\|P_{f(X_1)\ldots f(X_k)CF}-U_\ell^k\times P_C\times U_F\|_1\leq k\epsilon_{\text{hash}} + \epsilon_{\text{seed}},
\label{bound}
\end{equation}
if $H_{\min}(X_i|X_{i-1}\ldots X_1C)\geq \ell$. (The lower bound on the entropy is automatically satisfied if $X_{i-1}\ldots X_1$ is considered as previously available side information.) Here $U_\ell$ is the uniform distribution on $\{0,1\}^{\ell}$.
\begin{proof}
From the Leftover Hash Lemma with Side Information \eqref{lhl} if follows that for $1\leq i \leq k$
\begin{equation}
\|P_{f(X_i)X_{i+1}\ldots X_k C}\times U_F-U_\ell\times P_{X_{i+1}\ldots X_kC}\times U_F\|_1\leq \epsilon_{\text{hash}}.
\end{equation}
From the fact that the trace distance can only decrease under the application of $f$ (see for example \cite{nielsen00}) we first observe that
\begin{equation}
 \|P_{f(X_1)\ldots f(X_k)CF}-P_{f(X_1)\ldots f(X_k) C}\times U_F\|_1\leq\epsilon_{\text{seed}}
\label{eq2}
\end{equation}
and
\begin{equation}
\|P_{f(X_i)f(X_{i+1})\ldots f (X_k )C}\times U_F-U_\ell\times P_{f(X_{i+1})\ldots f(X_k)C}\times U_F\|_1\leq\epsilon_{\text{hash}},
\label{eq3}
\end{equation}
holds.\\
Now we use the triangle inequality to bound the quantity we are interested in
\begin{align*}
&\|P_{f(X_1)\ldots f(X_k)CF}-U_\ell^k\times P_C\times U_F\|_1\\
\leq& \|P_{f(X_1)\ldots f(X_k)CF}-P_{f(X_1)\ldots f(X_k)C}\times U_F\|_1+\|P_{f(X_1)\ldots f(X_k)C}\times U_F-U_\ell^k\times P_C\times U_F\|_1.
\end{align*}
From Eq.~\eqref{eq2} it follows that the first term is smaller than $\epsilon_{\text{seed}}$.\\
The second term can be bounded by subsequent application of the triangle inequality
\begin{align*}
&\|P_{f(X_1)\ldots f(X_k)C}\times U_F-U_\ell^k \times P_C\times U_F\|_1\\
\leq & \|P_{f(X_1)\ldots f(X_k)C}\times U_F-U_\ell\times P_{f(X_2)\ldots f(X_k)C}\times U_F\|_1+\|P_{f(X_2)\ldots f(X_k)C}\times U_F-U_\ell^{k-1}\times P_C\times U_F\|\\
\leq &\sum_{i=1}^k \|P_{f(X_i)\ldots f(X_k)C}\times U_F-U_\ell\times P_{f(X_{i+1})\ldots f(X_k)C}\times U_F\|_1\\
\leq &k\epsilon_{\text{hash}},
\end{align*}
where we used Eq.~\eqref{eq3}. Combining the two bounds yields Eq.~\eqref{bound}.
\end{proof}

\section{On POVMs and their Decompositions into Projective Measurements}\label{POVM}

In Section~\ref{sec:QRNGmodel} we discussed how a QRNG can be modelled by an input state and a set of measurements on it. As explained there side information about the measurement outcome can either result from a mixed input state (vs.\ a pure one) or if the measurement is a POVM (vs.\ a projective measurement). In the framework presented in the following all the side information was associated to the input state, i.e., the measurement is assumed to be projective. Another approach would be to associate the side information with the measurement, by allowing it to be a POVM, and choosing a pure input state. The idea is then that a general POVM can be regarded as a mixture over projective measurements and that such a mixing is equivalent to a hidden variable model producing noise of classical nature. Therefore, a possible approach would be to start with a POVM and consider a specific decomposition. The adversary is then assumed to know which of the projective measurements was chosen. Such a decomposition is not unique, but one could hope that different decompositions yield the same side information. However, the following example shows that this is not true, i.e., the the amount of extractable randomness can be different for different decompositions.\\

Consider the POVM given by
\[\{M^0,M^1\}=\left\{\left(\begin{array}{cc}2/3&0\\0&1/3\end{array}\right)\right.,\left.\left(\begin{array}{cc}1/3&0\\0&2/3\end{array}\right)\right\}.\]
One possible decomposition is
\[\{\mathcal{P}^1,\mathcal{P}^2\}=\Bigg\{\left\{\left(\begin{array}{cc}1&0\\0&0\end{array}\right)\right.,\left.\left(\begin{array}{cc}0&0\\0&1\end{array}\right)\right\},\left\{\left(\begin{array}{cc}0&0\\0&1\end{array}\right)\right.,\left.\left(\begin{array}{cc}1&0\\0&0\end{array}\right)\right\}\Bigg\}.\]
such that for $x\in\{0,1\}$
\[\{M^0,M^1\}=\frac{2}{3}\mathcal{P}^1+\frac{1}{3}\mathcal{P}^2.\]
And another decomposition is
\[\{\tilde{\mathcal{P}}^1,\tilde{\mathcal{P}}^2,\tilde{\mathcal{P}}^3\}=\Bigg\{\left\{\left(\begin{array}{cc}1&0\\0&0\end{array}\right)\right.,\left.\left(\begin{array}{cc}0&0\\0&1\end{array}\right)\right\},\left\{\id\right.,\left.0\right\},\left\{0\right.,\left.\id\right\}\Bigg\},\]
with
\[\{M^0,M^1\}=\frac{1}{3}\tilde{\mathcal{P}}^1+\frac{1}{3}\tilde{\mathcal{P}}^2+\frac{1}{3}\tilde{\mathcal{P}}3.\]
Let now $Z$ be the random variable corresponding to the first composition i.e.\ $Z$ has outcomes $z\in\{1,2\}$ with $P_{Z}(z=1)=\frac{2}{3}$ and $P_{Z}(z=2)=\frac{1}{3}$ and $z=1$ means that $\mathcal{P}^1$ is applied. Analogously we define $\tilde{Z}$.\\

Consider the input state
\[|\Psi\rangle=\frac{1}{\sqrt{2}}\left(|0\rangle+| 1\rangle\right).\]
A straight forward calculation shows that
\[2^{-H_{\min}(X|Z)}=\frac{1}{2}\]
whereas
\[2^{-H_{\min}(X|\tilde{Z})}=\frac{5}{6}.\]
In other words this means that the second decomposition gives more side information to a potential adversary and therefore, corresponds to less extractable randomness.

\section{Upper Bound for the Extractable Randomness}\label{vN}
In this section we show that an upper bound for the extractable entropy which is independent of $C$ is given by the Shannon entropy 
\[H(X|C)=\sum_{c\in\mathcal{C}}P_C(c)H(X|C=c).\]
More precisely we show that for any function ${f:\mathcal{X}\rightarrow\{0,1\}^{\ell}}$ such that $\|P_{f(X)C}-U_\ell\times P_{C}\|_1\leq\epsilon$, it holds that
\begin{equation}
\ell\leq H(X|C)+4\epsilon\log \ell+2h(\epsilon),
\end{equation}
where $h(x)=-x\log x-(1-x)\log(1-x)$ is the binary entropy function. The Shannon entropy of a probability distribution $P_X$ is defined as
\[H(X)=\sum_x -P_X(x)\log P_X(x)\]
and the conditional Shannon entropy is given by
\[H(X|C)=H(XC)-H(C).\]
\begin{proof}
We first show that the Shannon entropy of $X$ can only decrease under the application of $f$
\[H(X)\geq H(f(X)).\]
This can be seen from the observation that $H(f(X)|X)=0$ which implies $H(X)=H(f(X)X)$. It then follows
\[H(X|f(X))+H(f(X))=H(X).\]
Using $H(X|f(X))\geq 0$ gives the inequality $H(X)\geq H(f(X))$, which generalises to
\begin{equation}
H(X|C)\geq H(f(X)|C).
\label{eq1}
\end{equation}
Now the continuity of the conditional entropy \cite{alicki04} can be used, yielding
\[|H(f(X)|C)-H(U_\ell|C)|\leq 4\epsilon\log {\ell}+2h(\epsilon)\]
and therefore
\[H(f(X)|C)\geq {\ell}-4\epsilon\log {\ell}-2h(\epsilon).\]
Combination with Eq.~\eqref{eq1} yields the desired statement.
\end{proof}

\section{Calculation of $H_{\min}(X|NR_vR_h)$ and $H_{\min}(X)$ for the PBS-based QRNG}\label{HXRN}

This section provides additional details for the example discussed in Section~\ref{example}. 

The conditional min-entropy $H_{\min}(X|NR_vR_h)$ is given by
\begin{equation}
2^{-H_{\min}(X|NR_vR_h)}=\sum\limits_{nr_vr_h}P_N(n)P_{R_v}(r_v)P_{R_h}(r_h)2^{-H_{\min}(X|nr_vr_h)},
\label{eq:condHmin}
\end{equation}
where
\begin{equation}
H_{\min}(X|nr_vr_h)=-\log_2\left[\max\limits_{x=(x_v,x_h)}P_{X|NR_vR_h}(x|nr_vr_h)\right].
\label{Hmin}
\end{equation}
Table~\ref{distrwithSI} summarises the guessing probabilities $p(x_vx_h|nr_vr_h)$ for different $n$ and $r_{v,h}$.
\begin{center}
\begin{equation*}
H_{\min}(X|NR_vR_h)=-\log\left[P_N(0)+\sum_{n=1}^{\infty}P_N(n)\left\{(1-\mu)^2+2\cdot\mu(1-\mu)\left(1-\Big(\frac{1}{2}\Big)^n\right) +\mu^2\max\left[\Big(\frac{1}{2}\Big)^n,1-2\cdot\Big(\frac{1}{2}\Big)^n\right]\right\} \right]
\end{equation*}
\begin{table}[h]
{\renewcommand{\arraystretch}{3}
\renewcommand{\tabcolsep}{0.6cm}
\begin{tabular}{|c|c|}
\hline
$x_v,x_h$&$P_X(x_vx_h)$\\
\hline
$(1,1)$&$\sum\limits_{n=2}^{\infty}P_N(n)\left(1-2\cdot\Big(\frac{1}{2}\Big)^n\right)\cdot\mu^2$\\
\hline
$(0,1)$&$\frac{1}{2}P_N(1)\cdot\mu+\sum\limits_{n=2}^{\infty}P_N(n)\left(\Big(\frac{1}{2}\Big)^n\cdot\mu+\left(1-\Big(\frac{1}{2}\Big)^n\right)\cdot\mu(1-\mu)\right)$\\
\hline
$(1,0)$&$\frac{1}{2}P_N(1)\cdot\mu+\sum\limits_{n=2}^{\infty}P_N(n)\left(\Big(\frac{1}{2}\Big)^n\cdot\mu+\left(1-\Big(\frac{1}{2}\Big)^n\right)\cdot\mu(1-\mu)\right)$\\
\hline
$(0,0)$&$P_N(0)+P_N(1)\cdot(1-\mu)+\sum\limits_{n=2}^{\infty}P_N(n)\left(2\cdot\Big(\frac{1}{2}\Big)^n\cdot(1-\mu)+\left(1-2\cdot\Big(\frac{1}{2}\Big)^n\right)\cdot(1-\mu)^2\right)$\\[0.1cm]
\hline

\end{tabular}
}
\caption{\emph{Statistics of the raw randomness.} Distribution of the QRNG output $X$ without conditioning on side information.}
\label{distrwithoutSI}
\end{table}
\begin{table}[h]
{\renewcommand{\arraystretch}{1.5}
\renewcommand{\tabcolsep}{0.3cm}
\begin{tabular}{|c|c|c|c|c|c|}
\hline
$r_v,r_h$&$n$&$P_{X|NR_vR_h}(00|nr_vr_h)$&$P_{X|NR_vR_h}(01|nr_vr_h)$&$P_{X|NR_vR_h}(10|nr_vr_h)$&$P_{X|NR_vR_h}(11|nr_vr_h)$\\[0.1cm]
\hline
$(\cdot,\cdot)$&$0$&$1$&$0$&$0$&$0$\\
\hline
$(0,0)$&$\geq 1$&$1$&$0$&$0$&$0$\\
\hline
(0,1)&$\geq 1$&$\Big(\frac{1}{2}\Big)^n$&$1-\Big(\frac{1}{2}\Big)^n$&$0$&$0$\\
\hline
(1,0)&$\geq 1$&$\Big(\frac{1}{2}\Big)^n$&$0$&$1-\Big(\frac{1}{2}\Big)^n$&$0$\\
\hline
(1,1)&$\geq 1$&$0$&$\Big(\frac{1}{2}\Big)^n$&$\Big(\frac{1}{2}\Big)^n$&$1-2\cdot\Big(\frac{1}{2}\Big)^n$\\
\hline
\end{tabular}
}
\caption{\emph{Raw randomness conditioned on side information.} Probability distribution of the QRNG output $X$ conditioned on the side information $R_{v,h}$ and $N$.}
\label{distrwithSI}
\end{table}

\end{center}

The Shannon entropy $H(X|RN)$, which corresponds to an upper bound for the extractable entropy as shown in Section~\ref{vN} of the Appendix. It is equal to
\begin{align*}
H(X|NR_vR_h)=&\sum_{n,r_v,r_h}P_N(n)P_{R_v}(r_v)P_{R_h}(r_h)H(X|nr_vr_h)\\
=&\sum_{n=1}^{\infty}P_N(n)\Big\{2\cdot \mu\cdot(1-\mu)\left[-\Big(\frac{1}{2}\Big)^n\log\left(\Big(\frac{1}{2}\Big)^n\right)-\left(1-\Big(\frac{1}{2}\Big)^n\right)\log\left(\left(1-\Big(\frac{1}{2}\Big)^n\right)\right)\right]\\
&\mu^2\left[-2\cdot\Big(\frac{1}{2}\Big)^n\log\left(\Big(\frac{1}{2}\Big)^n\right)-\left(1-2\cdot\Big(\frac{1}{2}\Big)^n\right)\log\left(\left(1-2\cdot\Big(\frac{1}{2}\Big)^n\right)\right)\right]\Big\}
\end{align*}

\section{A More Detailed Model for the PBS-based QRNG}\label{quantis}

In this section we consider a more detailed model for the beam-splitter based QRNG considered in Section~\ref{example}. Now the sensitivity of the detectors with efficiency $\mu$ is assumed to depend on the number of photons hitting it (in Section~\ref{example} we assumed that it is independent). Explicitly we assume that if the source emits $n$ photons and $0\leq m\leq n$ photons arrive at one of the detectors $D_{v,h}$ the probability that the detector not fire is equal to $(1-\mu)^m$. In this more realistic model we also take noise in form of dark counts, afterpulses and crosstalk into account. As in Example~\ref{example} one can define a state and measurements such that the side information corresponding to the noise is encoded in a maximum classical noise model. Because those definitions are straightforward and add nothing conceptually new to the example, we proceed directly by introducing the random variables resulting from the model.

\begin{enumerate}
\item The number of photons emitted by the source is encoded as a random variable $N$ with outcomes $n\in\{0,\dots,\infty\}$ distributed according to the Poisson distribution 
\[P_N(n)=e^{-|\alpha|^2}\frac{|\alpha|^{2n}}{n!}.\]
\item The sensitivity of the detectors corresponds to the minimum number of photons that is needed for the detector to fire. This is modelled for each detector $D_{v,h}$ by a random variable $R_{v,h}$ with outcomes $r_{v,h}\in\{1,\dots n\}$. The distribution is given by
\begin{equation}\label{PR}
P_{R_{v,h}}(r_{v,h})=\mu(1-\mu)^{r_{v,h}-1}.
\end{equation}
The detector $D_{v,h}$ clicks if at least $r_{v,h}$ photons arrive. Eq.~\eqref{PR} is the probability that the detector did not fire for the for  first $r_{v,h}-1$ photons and that it a click is induced by the $r_{v,h}^{th}$ photon. Then, the probability that $m$ incoming photons are detected is equal to
\[\sum_{r_{v,h}=1}^mP_{R_{v,h}}(r_{v,h})=1-(1-\mu)^m,\]
which can be found using a geometric series. This is equal to one minus the probability that none of the photons is detected, which is what we expect.
\item Dark counts, afterpulses and crosstalk correspond to the side information whether a detector fires independently of whether photons arrive at it or not. This is encoded for each detector by random variable $S_{v,h}$ with outcomes $s_{v,h}\in\{0,1\}$, where $s_{v,h}=1$ corresponds to a such a deterministic click. The distribution is given by
\[P_{S_{v,h}}(s_{v,h}=1)=1-(1-p_{dark})\cdot(1-p_{\gamma})\cdot(1-p_{\delta}),\]
where $p_{dark}$ is the probability to have a dark count, $p_{\gamma}$ is the probability for after pulses and $p_{\delta}$ is the crosstalk-probability. If $X_{v,h}^i$ is the bit generated in the $i$-th run, then $p_{\delta}=\delta\cdot \Pr(x_{v,h}^{i-1}=1)$, where $\delta$ is a device-dependent parameter. Analogously we have $p_{\gamma}=\gamma \cdot P_X(x_{v,h}=1)$. If we assume that the probability distribution $P_X$ of the raw randomness is constant for each run we can omit the superscripts and simply write  ${\Pr(x_{v,h}^{i}=1)=\Pr(x_{v,h}=1)}$. If we also take it to be symmetric, we can define $p_x:=P_{X_{v,h}}(x_{v,h}=1)$, such that we have
\begin{equation}\label{ps}
P_{S_{v,h}}(s_{v,h}=1)=1-(1-p_{dark})\cdot(1-\gamma\cdot p_x)\cdot(1-\delta\cdot p_x).
\end{equation}
\end{enumerate}
The quantum randomness corresponds to a random variable $Q$ with uniformly distributed outcomes $q\in\{v,h\}$.\\
The final randomness is a function $\chi(Q^{\infty},N,S_v,S_h,R_v,R_h)=(x_v,x_h)$
\[x_{v}=\left\{\begin{array}{ll}1&\text{if } s_v=1 \text{ or if } |\{i:q_i=v\}_{1\leq i\leq n}|\geq r_v\\
0&\text{else}
\end{array}\right.\] 
\[x_{h}=\left\{\begin{array}{ll}1&\text{if } s_h=1 \text{ or if } |\{i:q_i=h\}_{1\leq i\leq n}|\geq r_h\\
0&\text{else}
\end{array}\right.\] 
To calculate $P_{S_{v,h}}$ explicitly, we first need to determine $p_x$, which can be done recursively using Eq.~\eqref{ps}\footnote{We write $R=R_{v,h}$ and $S=S_{v,h}$.}
\begin{align}
p_x&=P_S(s=1)+P_{S}(s=0)\underbrace{\left(\sum_{n=1}^\infty P_N(n)\sum_{r=1}^nP_{R}(r)\sum_{m=r}^n\left(\frac{1}{2}\right)^n\left(\begin{array}{c}n\\m\end{array}\right)\right)}_{:=p_{det}}\nonumber\\[0.3cm]
&=(1-p_{dark})\cdot(1-\gamma\cdot p_x)\cdot(1-\delta\cdot p_x)(1-p_{det})+p_{det}\label{px}
\end{align}
This can be solved for $p_x$ and reinserted into $P_{S_{v,h}}$ \eqref{ps}.
\begin{center}
\begin{table}[h]
{\renewcommand{\arraystretch}{1.5}
\renewcommand{\tabcolsep}{0.3cm}
\resizebox{\columnwidth}{!}{
\begin{tabular}{|c|c|c|c|c|}
\hline
$$&\multicolumn{4}{|c|}{$p(x_vx_h|nr_vr_h)=P_{X|NR_vR_h}(x_vx_h|nr_vr_h)$}\\
\hline
\multicolumn{5}{c}{$$}\\[-0.5cm]
\hline
$s_vs_h=(1,1)$&\multicolumn{4}{|c|}{$$}\\
\hline
&$p(00|nr_vr_h)$&$p(01|nr_vr_h)$&$p(10|nr_vr_h)$&$p(11|nr_vr_h)$\\[0.1cm]
\hline
&$0$&$0$&$0$&$1$\\
\hline
\multicolumn{5}{c}{$$}\\[-0.5cm]
\hline
$s_vs_h=(0,1)$&\multicolumn{4}{|c|}{$$}\\
\hline
&$p(00|nr_vr_h)$&$p(01|nr_vr_h)$&$p(10|nr_vr_h)$&$p(11|nr_vr_h)$\\[0.1cm]
\hline
&$0$&$\left(\frac{1}{2}\right)^n\sum\limits_{m=0}^{r_h-1}\left(\begin{array}{c}n\\m\end{array}\right)$&$0$&$\left(\frac{1}{2}\right)^n\sum\limits_{m=r_h}^{n}\left(\begin{array}{c}n\\m\end{array}\right)$\\
\hline
\multicolumn{5}{c}{$$}\\[-0.5cm]
\hline
$s_vs_h=(1,0)$&\multicolumn{4}{|c|}{$$}\\
\hline
&$p(00|nr_vr_h)$&$p(01|nr_vr_h)$&$p(10|nr_vr_h)$&$p(11|nr_vr_h)$\\[0.1cm]
\hline
&$0$&$0$&$\left(\frac{1}{2}\right)^n\sum\limits_{m=0}^{r_v-1}\left(\begin{array}{c}n\\m\end{array}\right)$&$\left(\frac{1}{2}\right)^n\sum\limits_{m=r_v}^{n}\left(\begin{array}{c}n\\m\end{array}\right)$\\
\hline
\multicolumn{5}{c}{$$}\\[-0.5cm]
\hline
$s_vs_h=(0,0)$&\multicolumn{4}{|c|}{$$}\\
\hline
&$p(00|nr_vr_h)$&$p(01|nr_vr_h)$&$p(10|nr_vr_h)$&$p(11|nr_vr_h)$\\[0.1cm]
\hline
$r_v,r_h>n$&$1$&$0$&$0$&$0$\\
\hline
$r_h\leq n,r_v>n$&$\left(\frac{1}{2}\right)^n\sum\limits_{m=0}^{r_h-1}\left(\begin{array}{c}n\\m\end{array}\right)$&$\left(\frac{1}{2}\right)^n\sum\limits_{m=r_h}^{n}\left(\begin{array}{c}n\\m\end{array}\right)$&$0$&$0$\\
\hline
$r_h>n, r_v\leq n$&$\left(\frac{1}{2}\right)^n\sum\limits_{m=0}^{r_v-1}\left(\begin{array}{c}n\\m\end{array}\right)$&$0$&$\left(\frac{1}{2}\right)^n\sum\limits_{m=r_v}^{n}\left(\begin{array}{c}n\\m\end{array}\right)$&$0$\\
\hline
$\begin{array}{c}r_v,r_h\leq n\\ r_h+r_v\leq n\\
\begin{tikzpicture}
\tiny
\draw (0,0)--(2.5,0);
\draw (0.6,-0.05)--(0.6,0.05);
\draw (1.9,-0.05)--(1.9,0.05);
\draw (0,-0.05)--(0,0.05);
\draw (2.5,-0.05)--(2.5,0.05);
\node at (0.3,0.2) {01};
\node at (1.25,0.2) {11};
\node at (2.2,0.2) {10};
\node at (0.6,-0.2) {$r_v$};
\node at (1.9,-0.2) {$n-r_h$};
\node at (2.7,0) {$m$};
\normalfont
\end{tikzpicture}
\end{array} $
&$0$&$\left(\frac{1}{2}\right)^n\sum\limits_{m=0}^{r_v-1}\left(\begin{array}{c}n\\m\end{array}\right)$&$\left(\frac{1}{2}\right)^n\sum\limits_{m=n-r_h+1}^{n}\left(\begin{array}{c}n\\m\end{array}\right)$&$\left(\frac{1}{2}\right)^n\sum\limits_{m=r_v}^{n-r_h}\left(\begin{array}{c}n\\m\end{array}\right)$\\
\hline
$\begin{array}{c}r_v,r_h\leq n\\ r_h+r_v> n\\
\begin{tikzpicture}
\tiny
\draw (0,0)--(2.5,0);
\draw (0.6,-0.05)--(0.6,0.05);
\draw (1.9,-0.05)--(1.9,0.05);
\draw (0,-0.05)--(0,0.05);
\draw (2.5,-0.05)--(2.5,0.05);
\node at (0.3,0.2) {01};
\node at (1.25,0.2) {00};
\node at (2.2,0.2) {10};
\node at (0.6,-0.2) {$n-r_h$};
\node at (1.9,-0.2) {$r_v$};
\node at (2.7,0) {$m$};
\normalfont
\end{tikzpicture}
\end{array} $&$\left(\frac{1}{2}\right)^n\sum\limits_{m=n-r_h+1}^{r_v-1}\left(\begin{array}{c}n\\m\end{array}\right)$&$\left(\frac{1}{2}\right)^n\sum\limits_{m=0}^{n-r_h}\left(\begin{array}{c}n\\m\end{array}\right)$&$\left(\frac{1}{2}\right)^n\sum\limits_{m=r_v}^{n}\left(\begin{array}{c}n\\m\end{array}\right)$&$0$\\
\hline
\end{tabular}
}
}
\caption{\emph{Raw randomness conditioned on side information.} Probability distribution of the QRNG output $X$ conditioned on side information $S_vS_h$, $R_vR_h$ and $N$.}
\label{distr2}
\end{table}
\end{center}
The joint distribution $P_{S_vS_h}(s_vs_h)$ is in general not equal to the product distribution $P_{S_v}(s_v)P_{S_h}(s_h)$. For the calculation of $H_{\min}(X|NS_vS_hR_vR_h)$ we minimise over all $P_{S_vS_h}(s_vs_h,y)$ subject to the constraint $P_{S_{v,h}}(s_{v,h}=1):=p$. The free parameter is $0\leq y\leq p$.\\

The conditional min-entropy is then equal to
\begin{multline*}
H_{\min}(X|NS_vS_hR_vR_h)\\=\min_y-\log\left[P_N(0)+\sum\limits_{n=1}^{\infty}P_N(n)\left(\sum\limits_{s_v,s_h,r_v,r_h}P_{R_v}(r_v)P_{R_h}(r_h)P_{S_vS_h}(s_vs_h,y)\max\limits_{x_vx_h}P_{X|S_vS_hR_vR_h}(x_vx_h|s_vs_hr_vr_h)\right)\right],
\end{multline*}
where the distribution of $P_{R_{v,h}}(r_{v,h})$ and $P_{S_vS_h}(s_v,s_h)$ are given by \eqref{PR} and \eqref{ps} respectively. The guessing probabilities can be found in Table~\ref{distr2}. The resulting min-entropy is shown in Fig.~\ref{HXSRN}.
\begin{figure}
\includegraphics[scale=0.6]{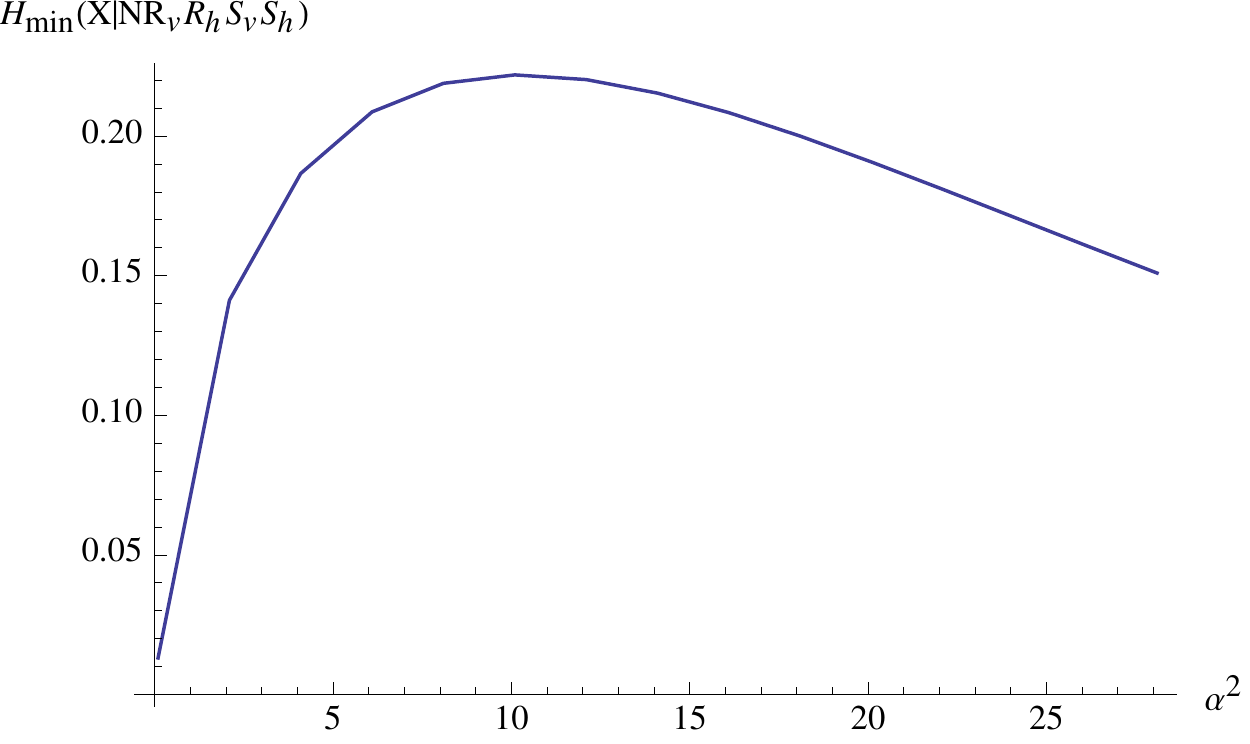}
\includegraphics[scale=0.6]{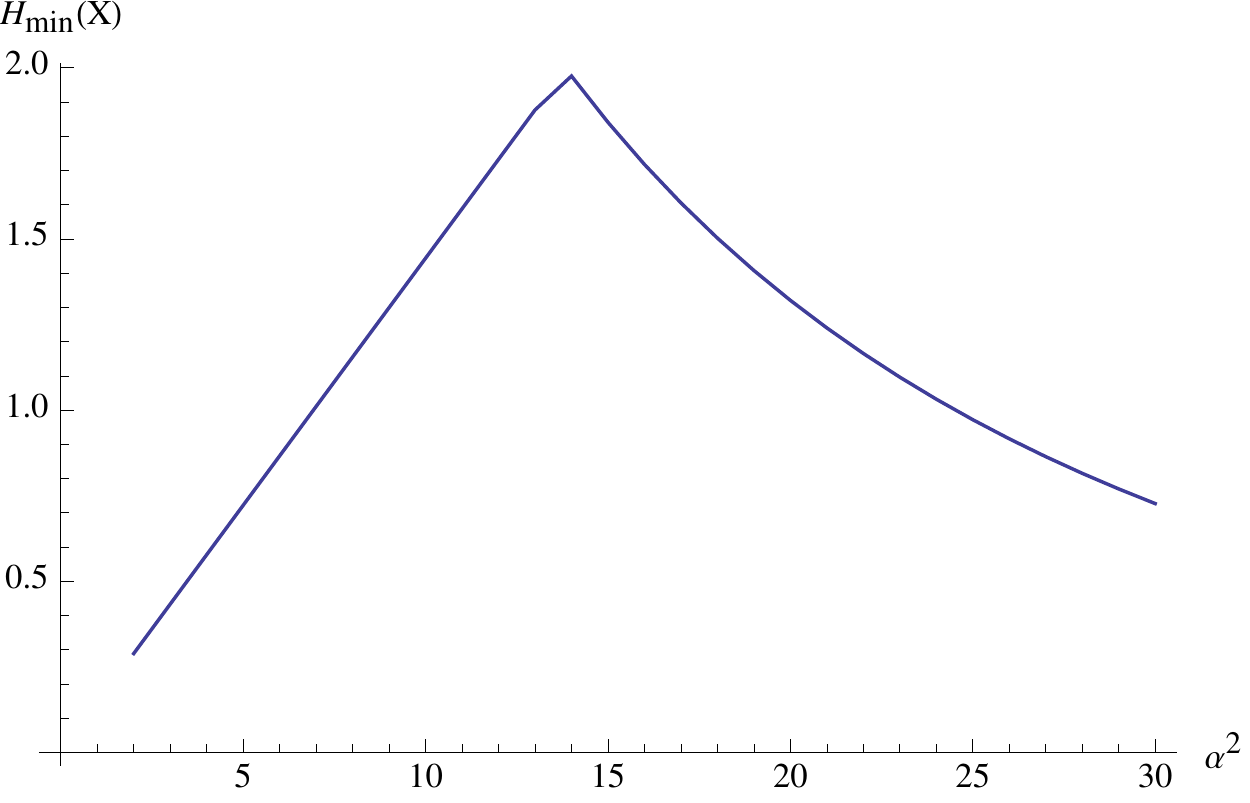}
\caption{Left: Extractable bit rate for a PBS-based QRNG such that the resulting randomness is independent of side information due to the source, limited detector efficiencies, afterpulses, cross talk and dark counts. Right: Extractable uniformly distributed bit rate: The resulting randomness may still depend on side information. The parameters for both plots are $\mu=0.1$, $p_{dark}=10^{-6}$, $\gamma=\delta=10^{-3}$.}\label{HXSRN}
\end{figure}
The extractable uniformly distributed bit rate can be found using Eq.~\eqref{px}
\[
H_{\min}(X)=-\log\left[\max_x P_X(x)\right].
\]
\subsection*{Additional Randomness from Arrival Time}
One possibility to increase the bit rate of the QRNG is to consider additional timing information. For example, if light pulses are sent out at times $n T$ (with $n \in \mathbb{N}$) one may add to the raw randomness for each detector the information whether the click was noticed in the time interval $[n T- T/2,  n T]$ or in $[n T,n T + T/2 ]$. 
To calculate the extractable randomness of such a modified scheme, one can apply the above analysis with pulses of half the original length. The idea is that one pulse can be seen as two pulses of half the original intensity (which is true if we assume a Poisson distribution of the photon number in the pulses). If there were no correlations between the pulses this would lead to a doubling of the bit rate (for appropriately chosen intensity of the source). However, because of the limited speed of the detectors (dead time and afterpulses) correlations between pulses may increase drastically when operating the device at a higher speed, which may again reduce the bit rate.

\section{Efficient Implementation of Randomness Extraction}
\label{sec:impl}

\begin{figure}
\lstset{language=C}
\begin{lstlisting}
#include <stdint.h>

const unsigned n=1024; // CHANGE to the number of input bits
const unsigned l=768;  // CHANGE to the number of output bits


// the extraction function
// parameters:
//   y: an output array of l bits stored as l/64 64-bit integers
//   m: a random matrix of l*n bits, stored in l*n/64 64-bit integers
//   x: an input array of n bits stores as n/64 64-bit integers

void extract(uint64_t * y, uint64_t const * m, uint64_t const * x) 
{
  assert (n%64==0 && l%64 == 0);
  
  int ind =0;
  // perform a matrix-vector multiplication by looping over all rows
  // the outer loop over all words
  for (int i = 0; i < l/64; ++i) {
    y[i]=0;
    // the inner loop over all bits in the word
    for (unsigned j = 0; j < 64; ++j) {
      uint64_t parity = m[ind++] & x[0];
      // perform a vector-vector multiplication using bit operations
      for (unsigned l = 1; l < n/64; ++l)
        parity ^= m[ind++] & x[l];
      // finally obtain the bit parity
      parity ^= parity >> 1;
      parity ^= parity >> 2;
      parity = (parity & 0x1111111111111111UL) * 0x1111111111111111UL;
      // and set the j-th output bit of the i-th output word
      y[i] |= ((parity >> 60) & 1) << j;
    }
  }
}

\end{lstlisting}
\caption{\emph{Efficient implementation of two-universal hashing on a 64-bit CPU in the C99 programming language.} The parameters $n$ and $\ell$ have to be multiples of 64 and can be changed in the second and third row of the code respectively.}
\label{fig:code}
\end{figure}
In this section we present an efficient implementation of randomness extraction by two-universal hashing. Given a random $\ell\times n$ bit matrix $m_{ij}$ two-universal hashing $Y = f(X)$ requires the evaluation of the matrix-vector product
\begin{equation}
Y_i =\sum_{j=1}^n m_{ij}X_j 
\end{equation}
to be performed modulo 2. This can be done very efficiently on modern CPUs using bit operations. Storing 32  (64) entries of the vector $X$ in a 32-bit (64-bit) integer, multiplication is implemented by bitwise $\mathsf{AND}$ operations and addition modulo 2 by bitwise $\mathsf{XOR}$ operations. A sum modulo 2 over all entries maps to the bit parity of the integer. An efficient implementation of two-universal hashing is given in Fig. \ref{fig:code}. The source code shown in this figure and optimised versions using explicitly vectorised compiler intrinsics for SSE4.2 are provided as supplementary material.

\end{appendix}
\end{document}